\newcommand{\suppress}[1]{}
\newcommand{\set}[1]{\left\{ #1 \right\}}
\newtheorem{theorem}{Theorem}[section]
\newtheorem{definition}[theorem]{Definition} 
\newtheorem{lemma}[theorem]{Lemma}
\newtheorem{proposition}[theorem]{Proposition}
\newtheorem{corollary}[theorem]{Corollary}
\providecommand{\Exp}{\mathbb{E}}
\providecommand{\zo}{\{0, 1\}}
\renewcommand{\Pr}{\mathop{\mathrm{Pr}}}
\newcommand{\TMAJ}{\mathsf{3\textrm{-}MAJ}}
\newcommand{\MAJ}{\mathsf{MAJ}}
\newcommand{\NAND}{\mathsf{NAND}}
\newcommand{\algoone}{{Algorithm~\ref{alg:algoone}}}
\newcommand{\algotwo}{{Algorithm~\ref{alg:algotwo}}}
\newcommand{\Input}[1]{\State \textbf{Input:} #1}
\newcommand{\Output}[1]{\State \textbf{Output:} #1}
\newcommand{\BlankLine}{\State \ }
\newcommand{\evaluate}{\textsc{Evaluate}}
\newcommand{\complete}{\textsc{Complete}}
\newcommand{\m}{\mathrm{m}}
\newcommand{\rM}{\mathrm{M}}
\newcommand{\rR}{\mathrm{R}}
\newcommand{\rC}{\mathrm{C}}
\newcommand{\rD}{\mathrm{D}}
\newcommand{\rT}{\mathrm{T}}
\newcommand{\cH}{\mathcal{H}}
\newcommand{\etal}{\textit{et al.}}
\newcommand{\Order}{\mathrm{O}}
\begin{document}

\title{Improved bounds for the randomized decision tree complexity\\
of recursive majority%
~\thanks{This work presents an extension of the ideas reported
in~\cite{mnsx11}.
Partially supported by the French ANR Blanc project ANR-12-BS02-005 (RDAM)
and the European Commission IST STREP projects Quantum Computer Science (QCS) 255961
and Quantum Algorithms (QALGO) 600700.}}
\date{}

\author[1]{Fr\'ed\'eric Magniez}
\author[2]{Ashwin Nayak\thanks{%
Work done in part at Perimeter Institute for Theoretical Physics, Waterloo,
ON, Canada; LRI---CNRS, Universit{\'e} Paris-Sud, Orsay, France; and Centre for
Quantum Technologies, National University of Singapore, Singapore.
Partially supported by NSERC Canada. Research at PI is supported 
by the Government of Canada through Industry Canada and by the Province 
of Ontario through MRI.
}}
\author[1,3]{Miklos Santha\thanks{%
Research at the Centre
for Quantum Technologies is funded by the Singapore Ministry of Education 
and the National Research Foundation, also through the Tier 3 Grant ``Random numbers from quantum processes".
}}
\author[4]{\\ Jonah Sherman}
\author[5]{G\'abor Tardos\thanks{Research partially supported by the
  MTA RAMKI Lend{\"u}let Cryptography Research Group, NSERC, the Hungarian
  OTKA grant NN-102029 and an
  exchange program at Zheijang Normal University.}}
\author[1]{David Xiao}
\affil[1]{CNRS, LIAFA, Univ Paris Diderot,
Paris, France
}
\affil[2]{C\&O and IQC, University of Waterloo, Waterloo, Canada
}
\affil[3]{Centre for Quantum Technologies,
National U. of Singapore, Singapore 
}
\affil[4]{CS Division, University of California, Berkeley, USA}
\affil[5]{R\'enyi Institute, Budapest, Hungary}

\maketitle

\begin{abstract}
We consider the randomized decision tree complexity of the recursive
3-majority function.  We prove a lower bound of~$(1/2-\delta) \cdot
2.57143^h$ for the two-sided-error randomized decision tree complexity
of evaluating height $h$ formulae with error~$\delta \in [0,1/2)$.
This improves the lower bound of $(1-2\delta)(7/3)^h$
given by Jayram, Kumar, and Sivakumar (STOC'03),
and the one of $(1-2\delta) \cdot 2.55^h$ given by Leonardos (ICALP'13).
Second, we improve the upper bound by giving a new zero-error
randomized decision tree algorithm that has complexity at most
$(1.007) \cdot 2.64944^h$. The previous best known
algorithm achieved complexity $(1.004) \cdot 2.65622^h$.
The new lower bound follows from a better analysis of the base case of the
recursion of Jayram \etal{} The new algorithm uses a novel
``interleaving'' of two recursive algorithms.
\end{abstract}

\section{Introduction}
\label{sec-introduction}

Decision trees form a simple model for computing boolean functions by
successively reading the input bits until the value of the function
can be determined. In this model, the only cost we consider the number of input
bits queried. This allows us to study the complexity of computing a
function in terms of its structural properties.
Formally, a {\em deterministic decision tree algorithm\/}
$A$ on $n$ variables is a binary tree in which each internal node is
labeled with an input variable $x_i$, and the leaves of the tree are
labeled by either 0 or 1.  Each internal node has two outgoing edges,
one labeled with 0, the other with 1. Every input $x= x_1 \dots x_n$
determines a unique path in the tree leading from the root to a leaf:
if an internal node is labeled by~$x_i$, we follow either the~$0$
or the~$1$ outgoing edge according to the value of $x_i$. The value of the
algorithm $A$ on input $x$, denoted by $A(x)$, is the label of the
leaf on this unique path. Thus, the algorithm $A$ {\em computes} a boolean
function $A : \{0,1\}^n \rightarrow \{0,1\}$.

We define the {\em cost} $\rC(A,x)$ of a deterministic decision tree 
algorithm $A$ on input $x$ as the number of input bits queried by $A$ 
on $x$. Let ${\cal P}_f$ be the set of all deterministic decision tree 
algorithms which compute $f$.  The {\em deterministic complexity} of $f$ is 
$
\rD(f) = \min_{A \in {\cal P}_f} \max_{x \in \{0,1\}^n} \rC(A,x)
$.
Since every function can be evaluated after reading all the input variables, $\rD(f) \leq n$.

In an extension of the deterministic model, we
can also permit randomization in the computation. 
A {\em randomized decision tree algorithm} $A$ on $n$ variables is a
distribution over all deterministic decision tree algorithms on $n$
variables. Given an input $x$, the algorithm first samples a
deterministic tree $B \in_\rR A$, then evaluates $B(x)$.  The error
probability of $A$ in computing $f$ is given by $\max_{x \in \zo^n}
\Pr_{B \in_\rR A}[B(x) \neq f(x)]$.  The {\em cost} of a randomized
algorithm $A$ on input $x$, denoted also by $\rC(A,x)$, is the
expected number of input bits queried by $A$ on $x$.
Let ${\cal
  P}^\delta_f$ be the set of randomized decision tree algorithms
computing $f$ with error at most $\delta$. The two-sided
bounded error {\em randomized complexity\/} of $f$ with error~$\delta \in
[0,1/2)$ is
$
\rR_\delta(f) = \min_{A \in {\cal P}^\delta_f} \max_{x \in \{0,1\}^n} \rC(A,x)
$.

We write $\rR(f)$ for~$\rR_0(f)$.  By definition, for all $0 \leq \delta
< 1/2$, it holds that $\rR_\delta(f) \leq \rR(f) \leq \rD(f)$, and
it is also known \cite{BI87, HH87, T90} that $\rD(f) \leq \rR(f)^2$,
and that for all constant $\delta \in (0,1/2)$, $\rD(f) \in
\Order(\rR_\delta(f)^3)$ \cite{Nis89}.

Considerable attention in the literature has been given to the
randomized complexity of functions computable by read-once formulae,
which are boolean formulae in which every input variable appears only
once.  For a large class of well balanced formulae with $\NAND$ gates
the exact randomized complexity is known. In particular, let $\NAND_h$
denote the \emph{complete} binary tree of height $h$ with $\NAND$
gates, where the inputs are at the $n = 2^h$ leaves.  Snir
\cite{Snir85} has shown that $\rR(\NAND_h) \in \Order(n^c)$ where $c =
\log_2 \Bigl(\frac{1+ \sqrt{33}}{4}\Bigr) \approx 0.753$.
A matching $\Omega(n^c)$
lower bound was obtained by Saks and Wigderson~\cite{SaksW86}, and
extended to Monte Carlo algorithms (i.e., with constant error $\delta<1/2$) by
Santha~\cite{Santha95}.
Since $\rD(\NAND_h) = 2^h = n$ this implies that $\rR(\NAND_h) \in \Theta (
\rD(\NAND_h)^c)$. Saks and Wigderson conjectured that
this is the largest gap between deterministic and randomized
complexity:
\emph{for every} boolean function $f$ and constant~$\delta \in [0,1/2)$,
$\rR_\delta(f) \in \Omega ( \rD(f)^c)$.
For the zero-error (Las Vegas) randomized complexity of read-once threshold
formula of depth $d$, Heiman, Newman, and Wigderson~\cite{HeNW90}
proved a lower bound of $\Omega (n/2^d)$. Heiman and
Wigderson~\cite{HeW91} proved that the zero-error randomized complexity of every
read-once formula $f$ is at least $\Omega ( \rD(f)^{0.51}).$

After such progress,
one would have hoped that the simple model of decision tree algorithms 
might shed more light on the power of randomness. But surprisingly, we 
know the exact randomized complexity of
very few boolean functions. In particular, the randomized complexity of
the recursive 3-majority function ($\TMAJ_h$) is still open. This function,
proposed by Boppana, was one of the earliest examples where randomized
algorithms were found to be more powerful than deterministic decision
trees~\cite{SaksW86}. It is a read-once formula on $3^h$ variables given 
by the complete ternary tree of height $h$ whose internal nodes are
majority gates.
It is easy to check that $\rD(\TMAJ_h) = 3^h$, but there is a naive
randomized recursive algorithm for $\TMAJ_h$ that performs better:
pick two random children of the root and recursively evaluate them,
then evaluate the third child if the value is not yet determined.
This has zero-error randomized complexity $(8/3)^h$. However, it was
already observed by Saks and Wigderson~\cite{SaksW86} that one can do
even better than this naive algorithm.  As for lower bounds,
that reading $2^h$ variables is necessary for zero-error algorithms is
easy to show. In spite of some similarities with the $\NAND_h$
function, no progress was reported on the randomized complexity of
$\TMAJ_h$ for~17 years.
In 2003, Jayram, Kumar, and Sivakumar~\cite{JaKS03} proposed an 
explicit randomized algorithm that achieves complexity $(1.004) \cdot 2.65622^h$,
and beats the naive recursion.
(Note, however, that the recurrence they derive in \cite[Appendix~B]{JaKS03}
is incorrect.)
They also prove a $(1-2\delta)(7/3)^h$
lower bound for the $\delta$-error randomized decision tree complexity 
of $\TMAJ_h$. In doing so, they introduce a powerful combinatorial
technique for proving decision tree lower bounds.

In this paper, we considerably improve the lower bound obtained in
\cite{JaKS03}, first by proving that
$\rR_\delta(\TMAJ_h) \geq (1-2\delta)(5/2)^h$, then further improving
the base 5/2. 
We also improve the upper bound by giving a new zero-error randomized decision
tree algorithm.
\begin{theorem}
For all~$\delta \in [0,1/2]$, we have
\[
(1/2-\delta)\cdot 2.57143^h
    \quad \leq \quad \rR_\delta(\TMAJ_h) 
    \quad \leq \quad (1.007) \cdot 2.64944^h \enspace.
\]
\end{theorem}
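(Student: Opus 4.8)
I would prove the two inequalities separately, by unrelated arguments.

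\textbf{Lower bound.} By Yao's minimax principle it suffices, for every height $h$, to exhibit a single distribution $\mathcal{D}_h$ on $\{0,1\}^{3^h}$ such that every deterministic decision tree computing $\TMAJ_h$ with error at most $\delta$ under $\mathcal{D}_h$ has expected query cost at least $(1/2-\delta)\cdot 2.57143^h$ on an input drawn from $\mathcal{D}_h$. The plan is to construct $\mathcal{D}_h$ recursively, as in Jayram, Kumar and Sivakumar~\cite{JaKS03}: conditioning on the output value (say $1$, by symmetry), assign independently to each of the three height-$(h-1)$ subformulae of the root a ``type'' drawn from a carefully tuned distribution, chosen so that with constant probability a given subformula is the unique ``minority'' child and hence its value is pivotal for the output. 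One then fixes a progress measure $\Phi_h(B)$ --- informally, the expected number, over $\mathcal{D}_h$ and over the run of $B$, of queries that $B$ makes inside subformulae whose value is still both undetermined and relevant --- and establishes two things: that $\Phi_h(B)$ lower bounds the expected cost up to the slack allowed by the error $\delta$, and that $\Phi_h \ge r\cdot \Phi_{h-1}$ for a constant $r$. Unwinding the recursion to $h=0$ gives the bound $r^h$, with $r=5/2$ in a first pass and $r = 18/7 \approx 2.57143$ after the sharper analysis below (the value $18/7$ coming at the cost of replacing the factor $(1-2\delta)$ of~\cite{JaKS03} by the smaller $(1/2-\delta)$).

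\textbf{The base case, and the main obstacle.} The recursion reduces everything to a base-case estimate: how a decision tree must interact with a single majority-of-three gate, each of whose three inputs is itself a hard sub-instance distributed as $\mathcal{D}_{h-1}$. The original analysis of this step in~\cite{JaKS03} is lossy (it yields $r=7/3$), and so is the subsequent improvement of Leonardos (giving $2.55$). The gain here comes from a finer accounting of the algorithm's state while it works on the gate: instead of merely recording which children have been ``resolved'', one tracks how many children are known, whether a known child already agrees with the eventual output, and how the queries made so far have skewed the conditional law of the minority child --- leading to a small optimization (essentially a linear program) over these states whose value is the base constant $r$. I expect this case analysis to be the crux of the whole argument. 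The difficulty is a tension between two requirements on the invariant defining $\Phi_h$: it must be robust enough that the inductive inequality $\Phi_h \ge r\,\Phi_{h-1}$ survives substituting the hard sub-instances $\mathcal{D}_{h-1}$ into the leaves (i.e., it must \emph{compose}), yet tight enough to push $r$ as high as $18/7$; reconciling these while simultaneously optimizing the ``type'' distribution used at the gates is the delicate part.

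\textbf{Upper bound.} For the upper bound it is enough to design a zero-error randomized algorithm with expected cost at most $(1.007)\cdot 2.64944^h$ on every input, since $\rR_\delta(\TMAJ_h) \le \rR(\TMAJ_h)$. The naive recursion --- evaluate two uniformly random height-$(h-1)$ children recursively, and the third only if they disagree --- costs $(8/3)^h$, and~\cite{JaKS03} already improve this by choosing the evaluation order adaptively. The new ingredient is an ``interleaving'' of two recursive sub-algorithms that differ in the order in which they evaluate children and in how they exploit partial information: rather than committing to one of them at a given gate, the algorithm runs both in an interleaved fashion on the subformulae so that queries made on behalf of one are reused by the other, switching which sub-algorithm drives the next query according to what has been learned (for instance, once some child is found to equal $b$, moving to the strategy better suited to confirming that the answer is $b$). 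Concretely I would set up a vector-valued recurrence for the expected cost of evaluating a height-$h$ formula, carried both unconditionally and conditioned on a (correct or incorrect) hint about the output value, since the recursive calls occur in both modes; then show this recurrence has solution $\Order(\rho^h)$ where $\rho = 2.64944$ is the dominant root of the resulting characteristic polynomial, the factor $1.007$ absorbing lower-order terms. Here the obstacle is mostly bookkeeping: one must check zero-error correctness of the interleaved algorithm in every case, and must solve the coupled (and mildly nonlinear) recurrence and verify that its dominant root really does beat the previous $2.65622$.
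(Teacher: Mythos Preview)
Your upper-bound plan is close to what the paper does. The two interleaved subroutines are made precise there as $\evaluate(v)$ (compute the value of $v$ from scratch) and $\complete(v,y_1)$ (finish computing $v$ given that one child $y_1$ is already known), and the ``vector-valued recurrence'' you anticipate is exactly the coupled system for $T(h)$, $S^{\rM}(h)$, $S^{\m}(h)$ (the latter two being the cost of $\complete$ when the known child $y_1$ is a majority, respectively minority, child). So on this half you are on the right track, though the actual scheduling is a specific depth-$2$ recursion rather than the ``run both and switch'' picture you sketch.

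Your lower-bound plan, however, has a real gap. The JKS framework already gives $\rR_\delta(\TMAJ_h)\ge a(2+q)^h$ whenever one can prove $p_h^\delta\ge a\,q^h$ for the probability $p_h^\delta$ of querying the absolute minority; what you describe --- analyzing how an algorithm interacts with a \emph{single} majority-of-three gate whose inputs are hard sub-instances --- is precisely the one-level encoding, and the paper shows this yields $q=1/2$ and hence base $5/2$, not $2.57143$. That bound is tight: there is a three-variable decision tree for which the relevant inequality holds with equality, so no amount of refined bookkeeping at a single gate will push $q$ beyond $1/2$. Reaching $2.57143$ requires a $k$-level encoding: one relates $p_h^\delta$ to $p_{h-k}^\delta$ by analyzing decision trees on $3^k$ variables, computes a parameter $\alpha_k$ by maximizing a certain linear functional over all such trees (via dynamic programming over ``stable configurations''), and obtains base $2+\alpha_k^{-1/k}$. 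The paper carries this out by hand for $k=2$ (giving $\alpha_2=24/7$ and base $>2.54$) and by computer for $k=3,4$; the stated constant comes from $k=4$, namely $2+(216164/2027349)^{1/4}\approx 2.57143$. Its near-agreement with $18/7$ is a coincidence, and the factor $(1/2-\delta)$ is not the price of a sharper gate analysis but simply a convenient lower bound for the constant $(1-2\delta)\,\alpha_k/2^k$ that arises from the base cases $h<k$ of the $k$-step recursion.
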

In contrast to the randomized case, the bounded-error \emph{quantum\/} 
query complexity of $\TMAJ_h$ is known more precisely; it is
in~$\Theta(2^h)$~\cite{ReichardtS08}.

\textbf{New lower bound.}
For the lower bound, Jayram
\etal{} consider a complexity measure related to the distributional
complexity of~$\TMAJ_h$ with respect to a specific ``hard'' distribution
(cf.\ Section~\ref{sec-JKS}).  The focus of the proof is a relationship
between the complexity of evaluating formulae of height~$h$ to that of
evaluating formulae of height~$h-1$. They derive a sophisticated
recurrence relation between these two quantities, that finally implies
that $\rR_\delta(\TMAJ_h) \geq\alpha (2 + q)^h$, where
$\alpha q^h$ is a lower bound on the probability~$p^\delta_h$ that
a randomized algorithm with error at most $\delta$ queries a special
variable, called the ``absolute minority'', on inputs drawn from the hard distribution.
They observe that any randomized decision tree with error at most $\delta$
queries at least one variable with probability~$1-2\delta$.  This
variable has probability~$3^{-h}$ of being the absolute minority,
so~$q=1/3$ and $\alpha=1-2\delta$ satisfies the conditions and their lower bound follows.

We obtain new lower bounds by improving the bound on~$p_h^\delta$.
We start by proving that~$p_h^\delta \geq (1-2\delta)2^{-h}$, i.e.,
increasing $q$ to $1/2$, which immediately implies a better lower 
bound for $\rR_\delta(\TMAJ_h)$.  To obtain this bound, we examine the 
relationship between~$p_h^\delta$ and~$p_{h-1}^\delta$, by encoding a 
height~$h-1$ instance into a height~$h$ instance, and using an algorithm 
for the latter instance.
Analyzing this encoding requires understanding the behavior of all
decision trees on $3$ variables, and this can be done by exhaustively
considering all such trees.

We further improve this lower bound by encoding height $h-2$ instances 
into height $h$ instances,
and prove $p^\delta_h\ge\alpha q^h$ for $q=\sqrt{7/24}>0.54006$. For
technical reasons we set $\alpha=1/2-\delta$ (half the value
considered by Jayram \etal{} in their bound).
For encodings of height $h-3$ and $h-4$ instances into 
height $h$ instances, we use a computer to get the better estimates,
with $q=(2203/12231)^{1/3} > 0.56474$ and $q=(216164/2027349)^{1/4}
> 0.57143$, respectively.

The lower bound of~$(1-2\delta)(5/2)^h$ mentioned above was presented in a
preliminary version of this article~\cite{mnsx11}.
Independent of the further improvements we make, Leonardos~\cite{leo13}
gave a lower bound of $\rR_\delta(\TMAJ_h)\geq (1-2\delta)\cdot 2.55^h$.
His approach is different from ours, and is based on the method 
of generalized costs proposed by Saks and Wigderson~\cite{SaksW86}.
The final lower bound~$(1/2-\delta)\cdot 2.57143^h$
we obtain surpasses the bound due to Leonardos.

\textbf{New algorithm.}
The naive algorithm and the algorithm of Jayram \etal{} are examples
of {\em depth-$k$\/} recursive algorithms for~$\TMAJ_h$, for~$k =
1,2$, respectively. A depth-$k$ recursive algorithm is a collection of
subroutines, where each subroutine evaluates a node (possibly using
information about other previously evaluated nodes), satisfying the
following constraint: when a subroutine evaluates a node~$v$, it is
only allowed to call other subroutines to evaluate children of $v$ at
depth at most $k$, but is not allowed to call subroutines or otherwise
evaluate children that are deeper than $k$. (Our notion of depth-one
is identical to the terminology ``directional'' that appears in the
literature. In particular, the naive recursive algorithm is a
directional algorithm.)

We present an improved depth-two recursive algorithm.
To evaluate the root of the
majority formula, we recursively evaluate one grandchild from each
of two distinct children of the root.  The grandchildren ``give an
opinion'' about the values of their parents.  The opinion guides the
remaining computation in a natural manner: if the opinion indicates
that the children are likely to agree, we evaluate the two children in
sequence to confirm the opinion, otherwise we evaluate the third
child.  If at any point the opinion of the nodes evaluated so far
changes, we modify  future computations accordingly. A key
innovation is the use of an algorithm optimized to compute the value
of a \emph{partially evaluated} formula.  In the analysis, we
recognize when incorrect opinions are formed, and take advantage of
the fact that this happens with smaller probability.

We do not believe that the algorithm we present here is optimal.
Indeed, we conjecture that even better algorithms exist that follow 
the same high level intuition applied to depth-$k$ recursion, for $k > 2$.
However, it seems new insights are required to analyze the performance 
of deeper recursions, as the formulas describing their
complexity become unmanageable for $k > 2$.

\textbf{Organization. }
We prepare the background for the main
results Section~\ref{sec-prelim}. In Section~\ref{sec-lb} we prove the new
lower bounds for $\TMAJ_h$. The new algorithm for the problem is described 
and analyzed in Section~\ref{sec-algo}. 

\section{Preliminaries}
\label{sec-prelim}

We write~$u \in_\rR D$ to state that~$u$ is sampled
from the distribution~$D$.  If $X$ is a finite set, we identify
$X$ with the uniform distribution over $X$, and so, for instance, $u
\in_\rR X$ denotes a uniform element of $X$.

\subsection{Distributional Complexity}

A variant of the randomized complexity we use is {\em
  distributional\/} complexity.  Let ${\cal D}_n$ be the set of
distributions over $\{0,1\}^n$. The {\em cost} $\rC(A,D)$ of a
randomized decision tree algorithm $A$ on $n$ variables with respect
to a distribution $D \in {\cal D}_n$ is the expected number of bits
queried by $A$, where the expectation is taken over inputs sampled
from $D$ and the random coins of $A$.  The \emph{distributional
complexity\/} of a function $f$ on $n$ variables for $\delta$ 
two-sided error is
$
\Delta_\delta(f) = \max_{D \in {\cal D}_n} \min_{A \in {\cal P}^\delta_f} \rC(A,D)
$.
The following observation is a well established route to proving 
lower bounds on worst case complexity.
\begin{proposition}
\label{fact:dist}
$\rR_\delta(f) \geq \Delta_\delta(f)$.
\end{proposition}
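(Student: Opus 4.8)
The plan is to use the standard averaging argument behind Yao's minimax principle (the ``easy'' direction). First I would fix a randomized decision tree algorithm $A^\star \in {\cal P}^\delta_f$ attaining the worst-case optimum, that is, with $\max_{x \in \zo^n} \rC(A^\star, x) = \rR_\delta(f)$. Such an $A^\star$ exists because ${\cal P}^\delta_f$ is nonempty (for instance the deterministic algorithm that reads all $n$ variables has zero error, hence error at most $\delta$) and the defining quantities are well-defined over this set.

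The key step is to observe that for an arbitrary distribution $D \in {\cal D}_n$, the cost of $A^\star$ with respect to $D$ is an average of per-input costs and is therefore no larger than the worst-case cost:
$\rC(A^\star, D) = \E_{x \in_\rR D}\,\rC(A^\star, x) \le \max_{x \in \zo^n} \rC(A^\star, x) = \rR_\delta(f).$
Since $A^\star$ has error at most $\delta$ on every input, it lies in ${\cal P}^\delta_f$ and so is a legal competitor in the inner minimization defining $\Delta_\delta(f)$; hence $\min_{A \in {\cal P}^\delta_f} \rC(A, D) \le \rC(A^\star, D) \le \rR_\delta(f)$. Because the same $A^\star$ witnesses this inequality for every $D$, I can then take the maximum over $D \in {\cal D}_n$ on the left without changing the right-hand side, yielding $\Delta_\delta(f) = \max_{D \in {\cal D}_n} \min_{A \in {\cal P}^\delta_f} \rC(A, D) \le \rR_\delta(f)$, which is the claim.

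There is no genuine obstacle in this direction; the only point that deserves attention is precisely that a single fixed algorithm $A^\star$ serves simultaneously as a witness for all distributions $D$, which is what allows the outer $\max_D$ to pass through the inequality. (The converse direction, relating $\Delta_\delta(f)$ back to $\rR_\delta(f)$, is the nontrivial half of minimax and requires linear-programming duality; we do not need it here.)
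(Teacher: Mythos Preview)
Your argument is correct and is essentially the same averaging (weak-duality) argument the paper sketches: both rely on the fact that an expectation over inputs is bounded by the worst-case cost. The only cosmetic difference is the order of quantifiers you unwind---you fix an optimal algorithm $A^\star$ and vary $D$, whereas the paper fixes a maximizing distribution $D$ and varies $A$; either way the core step is $\rC(A,D)=\E_{x\in_\rR D}\rC(A,x)\le\max_x \rC(A,x)$.
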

\suppress{
To see this fact, consider the distribution $D$ achieving the maximum
in the definition of $\Delta_\delta(f)$; then by an averaging
argument, for all $A \in {\cal P}^\delta_f$ there exists an input $x$
in the support of $D$ such that $C(A, x) \geq \rC(A, D) \geq
\Delta_\delta(f)$.  This implies that the worst-case complexity of all
$A$ is at least $\Delta_\delta(f)$.
}

\subsection{The $\TMAJ_h$ Function and the Hard Distribution}
\label{sec-hard-distr}

Let $\MAJ(x)$ denote the boolean majority function of its input bits.
The ternary majority function $\TMAJ_h$ is defined recursively on
$n=3^h$ variables, for every $h \geq 0$.  We omit the height $h$ when
it is obvious from context.  For $h=0$ it is the identity
function. For $h >0$, let $x$ be an input of length $n$ and let
$x^{(1)}, x^{(2)}, x^{(3)}$ be the first, second, and third $n/3$
variables of $x$. Then
$$
\TMAJ_h (x)
 \quad = \quad  \MAJ ( 
\TMAJ_{h-1}(x^{(1)}), \
 \TMAJ_{h-1}(x^{(2)}), \
\TMAJ_{h-1}(x^{(3)})
) \enspace.
$$
In other terms, $\TMAJ_h$
is defined by the read-once formula on the complete ternary tree
$\rT_h$ of height $h$ in which every internal node is a majority gate.
We identify the leaves of $\rT_h$ from left to right with the integers $1, \ldots, 3^h$. For an
input $x \in \{0,1\}^h$, the bit $x_i$ defines the {\em value} of the leaf  $i$, and then the
values of the internal nodes are evaluated recursively. The value of
the root is $\TMAJ_h(x)$. 
For every node $v$ in $\rT_h$ different from the root, let $P(v)$ denote
the parent of $v$.  We say that $v$ and $w$ are siblings if $P(v) =
P(w)$. For any node $v$ in $\rT_h$, 
let $Z(v)$ denote the set of variables associated with the leaves in
the subtree rooted at $v$.  We say that a node $v$ is at depth $d$ in
$\rT_h$ if the distance between $v$ and the root is $d$. The root is
therefore at depth 0, and the leaves are at depth $h$.

We now define recursively, for every $h \geq 0$, the set ${\cal H}_h$
of {\em hard inputs} of height $h$. 
In the base case ${\cal H}_0 = \{0,1\}$. For $h > 0$, let
\begin{equation*}
\begin{split}
{\cal H}_h \quad =  \quad \{(x,y,z) \in {\cal H}_{h-1} \times {\cal H}_{h-1}
\times {\cal H}_{h-1}  : &\  \TMAJ_{h-1}(x), \TMAJ_{h-1}(y),
\text{ and } \\
    &\quad \TMAJ_{h-1}(z) \text{ are not all identical}\} \enspace.
\end{split}
\end{equation*}
The hard inputs consist of instances for which at each
node $v$ in the ternary tree, one child of $v$ has value different from
the value of $v$. The {\em hard distribution\/} on inputs
of height $h$ is defined to be the uniform distribution over ${\cal
  H}_h$. We call a hard input $x$ {\em 0-hard} or {\em
  1-hard} depending on whether $\TMAJ_h(x)=0$ or $1$. We write
$\cH_h^0$ for the set of 0-hard inputs and $\cH_h^1$ for the set of
1-hard inputs.

For an $x \in {\cal H}_h$, the {\em minority path\/} $M(x)$ is the path,
starting at the root, obtained by following the child whose value
disagrees with its parent.  For $0\leq d \leq h$, the node of $M(x)$
at depth $d$ is called the depth $d$ minority node, and is
denoted by $M(x)_d$.  We call the leaf $M(x)_h$ of the minority path
the \emph{absolute minority} of $x$, and denote it by $m(x)$.

\subsection{The Jayram-Kumar-Sivakumar Lower Bound}
\label{sec-JKS}

For a deterministic decision tree algorithm $B$ computing $\TMAJ_h$, let
$L_B(x)$ denote the set of variables queried by $B$ on input $x$.
Recall that ${\cal P}^\delta_{\TMAJ_h}$ is the set of all randomized
decision tree algorithms that compute $\TMAJ_h$ with two-sided error at most $\delta$.
Jayram \etal{} define the function
$I^\delta(h,d)$, for $d \leq h$:
$$
I^\delta(h,d) \quad = \quad 
  \min_{A \in {\cal P}^\delta_{\TMAJ_h}} \Exp_{x \in_\rR
  \cH_h, B \in_\rR A} [| Z(M(x)_d) \cap L_B(x) |] \enspace.
$$
In words, it is the minimum over algorithms computing $\TMAJ_h$,
of the expected number of queries below the $d$th level minority node,
over inputs from the hard distribution. 
Note that $I^\delta(h,0) =  \min_{A \in {\cal P}^\delta_{\TMAJ_h}} \rC(A, \cH_h)$,
and therefore by Proposition~\ref{fact:dist},
$\rR_\delta(\TMAJ_h) \geq I^\delta(h,0)$.

We define $p^\delta_h = I^\delta(h,h)$,
which is the minimal probability that a $\delta$-error
algorithm $A$ queries the absolute minority of a random hard $x$ of
height $h$.

Jayram \etal{} prove a recursive lower bound for $I^\delta(h,d)$ using 
information theoretic arguments. A more elementary proof can be found in
~\cite{LNPV06}.
\begin{theorem}[Jayram, Kumar, Sivakumar~\cite{JaKS03}]
\label{thm: jks}
For all $0 \leq d < h$:
$$
I^\delta(h,d) \quad \geq \quad I^\delta(h,d+1) + 2 I^\delta(h-1, d)
\enspace.
$$
\end{theorem}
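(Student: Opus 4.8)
The plan is to look at the three subtrees hanging off the depth-$d$ minority node. Fix any $A\in\mathcal P^\delta_{\TMAJ_h}$. For $x\in\cH_h$ the node $M(x)_d$ has three children: its minority child $M(x)_{d+1}$, and two \emph{majority children} $c_1(x),c_2(x)$ whose values agree with that of $M(x)_d$. Since $Z(M(x)_d)=Z(M(x)_{d+1})\sqcup Z(c_1(x))\sqcup Z(c_2(x))$, for every $B$ in the support of $A$ we have $|Z(M(x)_d)\cap L_B(x)|=|Z(M(x)_{d+1})\cap L_B(x)|+|Z(c_1(x))\cap L_B(x)|+|Z(c_2(x))\cap L_B(x)|$. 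Taking the expectation over $x\in_\rR\cH_h$ and $B\in_\rR A$ reduces the statement to two inequalities: $\Exp[|Z(M(x)_{d+1})\cap L_B(x)|]\ge I^\delta(h,d+1)$, and $\Exp[|Z(c_1(x))\cap L_B(x)|]+\Exp[|Z(c_2(x))\cap L_B(x)|]\ge 2I^\delta(h-1,d)$. The first is immediate, since $A$ is feasible in the minimization defining $I^\delta(h,d+1)$. So the content is the second inequality; as its left-hand side equals twice the expected number of queries $A$ makes inside a uniformly chosen majority child $c$ of $M(x)_d$, it suffices to show $\Exp[|Z(c)\cap L_B(x)|]\ge I^\delta(h-1,d)$, the expectation now also over $c$.

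I would prove this by a reduction turning $A$ into a two-sided $\delta$-error algorithm $A'$ for $\TMAJ_{h-1}$: ``plant'' a height-$(h-1)$ instance $z$ inside a height-$h$ instance $x$ so that (i) the subtree $Z(M(z)_d)$ of the depth-$d$ minority node of $z$ occupies a majority child of $M(x)_d$, so that the bits of $Z(M(z)_d)$ queried by $A'$ are exactly the bits of that majority child queried by $A$; (ii) $\TMAJ_h(x)$ equals $\TMAJ_{h-1}(z)$ up to a fixed complementation on \emph{every} input $z$ (not only hard ones), so that correctness of $A$ transfers to $A'$; and (iii) $z\in_\rR\cH_{h-1}$ induces $x\in_\rR\cH_h$, the remaining parts of $x$ being filled by fresh uniform hard subtrees. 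Then $A'\in\mathcal P^\delta_{\TMAJ_{h-1}}$ gives $\Exp[|Z(M(z)_d)\cap L_{B'}(z)|]\ge I^\delta(h-1,d)$ over $z\in_\rR\cH_{h-1}$ and $B'\in_\rR A'$, and by construction this equals $\Exp[|Z(c)\cap L_B(x)|]$. The base case $d=0$ is a one-line gadget: take $z$ to be one child of the root of $x$ and the other two children to be uniform hard subtrees of values $0$ and $1$; since $\MAJ(\TMAJ_{h-1}(z),0,1)=\TMAJ_{h-1}(z)$ whatever $\TMAJ_{h-1}(z)$ is, the $z$-child is automatically a majority child of $M(x)_0$, and $x$ is distributed as $\cH_h$.

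The hard part is carrying out (i)--(iii) for general $d$: making $M(x)_d$ lie on the minority path of $x$ forces the off-path subtrees at depths $1,\dots,d$ to have prescribed values so that each $M(x)_j$ is the unique child disagreeing with its parent, yet those values depend on $\TMAJ_{h-1}(z)$, which $A'$ has not yet learned. I would handle this by building $x$ level by level, flipping one off-path subtree at depth $d$ together with the planted minority subtree so that the two flips cancel at the majority gate of $M(x)_d$ and leave every value above depth $d$ untouched, and then applying the $\MAJ(v,0,1)=v$ gadget once more at the top to absorb the still-unknown root value; one must also check that the flips create no second copy of $Z(M(z)_d)$ in $x$ that would be over-counted. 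This bookkeeping is the delicate point. A cleaner alternative, and the route taken by Jayram \etal, is to skip the explicit construction and derive the recursion by an information-theoretic argument relating the transcript of $B$ to the minority structure of $x$ (the factor $2$ reflecting that the values of both majority children of $M(x)_d$ must essentially be pinned down, and the $I^\delta(h,d+1)$ term coming from recursing into the minority child); the elementary argument of \cite{LNPV06} is another option.
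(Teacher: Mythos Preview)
The paper does not actually prove this theorem; it only cites the information-theoretic argument of Jayram \etal{} and the elementary proof in~\cite{LNPV06}. Your decomposition of $Z(M(x)_d)$ into the minority child (yielding the $I^\delta(h,d+1)$ term immediately) and the two majority children is exactly the right starting point, and your reduction for $d=0$ is correct and complete.

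Where your argument stalls is the general-$d$ sketch. Inserting the $\MAJ(v,0,1)=v$ gadget at the \emph{root} sends the minority path of $x$ into a filler subtree at every depth $\ge1$, so $M(x)_d$ no longer has anything to do with $z$; the ``flipping'' you propose to repair this is not specified precisely enough to check, and it is not clear it preserves the hard distribution or avoids the double-counting you worry about. The clean fix is to insert the gadget \emph{at depth $d$ uniformly across the tree}: for every depth-$d$ node $u$ of the height-$(h-1)$ tree, replace its subtree $T_u$ by a height-$(h-d)$ tree whose three children, in random order, are $T_u$ together with an independent uniform $0$-hard and an independent uniform $1$-hard tree of height $h-1-d$. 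This produces a height-$h$ tree $x$ with $\TMAJ_h(x)=\TMAJ_{h-1}(z)$ for \emph{every} $z$ (so $A'$ inherits error $\le\delta$); for $z\in_\rR\cH_{h-1}$ one checks inductively that $x$ is uniform in $\cH_h$; the top $d$ levels and all depth-$d$ node values are unchanged, so $M(x)_j=M(z)_j$ for all $j\le d$; and at $M(x)_d$ the planted subtree $T_{M(z)_d}$ is, conditioned on $x$, exchangeable with the $v_d$-valued filler and hence a uniformly random majority child. Thus the queries $A'$ makes inside $Z(M(z)_d)$ equal in expectation the queries $A$ makes inside a uniform majority child of $M(x)_d$, which is exactly the inequality you needed---no flips, no top-level gadget, no second copy to track.
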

A simple computation gives then the following  lower bound on $I^\delta(h,d)$, for
all $0 \leq d \leq h$, expressed as a function of the $p^\delta_i$'s:
\begin{equation*}
I^\delta(h,d) \quad \geq \quad \sum_{i=d}^h {h-d \choose i-d} 2^{h-i}  p^\delta_i \enspace.
\end{equation*}
When $d=0$, this gives
$I^\delta(h,0) \geq \sum_{i=0}^h {h \choose i} 2^{h-i}  p^\delta_i$.
Putting this together with the fact that $\rR_\delta(\TMAJ_h) \geq
I^\delta(h,0)$, we get the following corollary: 
\begin{corollary}
\label{cor: jks}
Let $q, a > 0$ such that $p^\delta_i \geq a \cdot q^i$ for all~$i \in
\set{0, 1, 2, \dotsc, h}$. Then $\rR_\delta(\TMAJ_h) \geq a (2+q)^h$.
\end{corollary}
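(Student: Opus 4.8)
The plan is to chain together three ingredients, two of which are already assembled in the excerpt: the distributional bound $\rR_\delta(\TMAJ_h) \ge I^\delta(h,0)$, the recurrence of Theorem~\ref{thm: jks}, and the binomial theorem. The only real work is to unwind the recurrence into closed form and then feed in the hypothesis $p^\delta_i \ge a q^i$.

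First I would establish, for all $0 \le d \le h$, the bound
\[
I^\delta(h,d) \;\ge\; \sum_{i=d}^{h} {h-d \choose i-d}\, 2^{h-i}\, p^\delta_i ,
\]
by a double induction: an outer induction on $h$ and, for each fixed $h$, an inner reverse induction on $d$ running from $d=h$ down to $d=0$. The base cases are exactly those with $d=h$ (this also covers $h=0$); there the right-hand side collapses to the single term $i=h$ and the bound is the definitional identity $I^\delta(h,h)=p^\delta_h$. For the inductive step with $d<h$, apply Theorem~\ref{thm: jks} to get $I^\delta(h,d)\ge I^\delta(h,d+1)+2\,I^\delta(h-1,d)$ --- legitimate precisely because $d<h$ --- and substitute the inner hypothesis for $I^\delta(h,d+1)$ and the outer hypothesis for $I^\delta(h-1,d)$ (valid since $d\le h-1$). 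After rewriting $2\cdot 2^{h-1-i}$ in the second sum as $2^{h-i}$, I would collect the coefficient of each $p^\delta_i$: the endpoint terms $i=d$ and $i=h$ match directly, and for $d<i<h$ the two contributions combine via Pascal's rule ${h-d-1 \choose i-d-1}+{h-d-1 \choose i-d}={h-d \choose i-d}$ after pulling out the common factor $2^{h-i}$. This reproduces the claimed bound at $(h,d)$ and closes the induction.

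Setting $d=0$ gives $I^\delta(h,0)\ge\sum_{i=0}^{h}{h \choose i}2^{h-i}p^\delta_i$. Using $p^\delta_i\ge a q^i$ for every $i\in\set{0,1,\dots,h}$ and the binomial theorem,
\[
I^\delta(h,0) \;\ge\; a\sum_{i=0}^{h}{h \choose i}\,2^{h-i}\,q^{i} \;=\; a\,(2+q)^{h}.
\]
Combining with $\rR_\delta(\TMAJ_h)\ge I^\delta(h,0)$ then yields $\rR_\delta(\TMAJ_h)\ge a(2+q)^h$, which is the claim.

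I do not expect a genuine obstacle: every piece is either supplied by the excerpt or is a standard identity. The only step demanding care is the index bookkeeping in the double induction --- ensuring the outer hypothesis is invoked only at depths $d\le h-1$, where Theorem~\ref{thm: jks} applies, and that the reindexing of the $2\,I^\delta(h-1,d)$ term is done correctly before the coefficients of the two sums are matched.
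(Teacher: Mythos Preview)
Your proposal is correct and follows exactly the route the paper takes: the paper asserts the closed-form bound $I^\delta(h,d) \ge \sum_{i=d}^{h}{h-d\choose i-d}2^{h-i}p^\delta_i$ as ``a simple computation'' from Theorem~\ref{thm: jks}, sets $d=0$, and combines with $\rR_\delta(\TMAJ_h)\ge I^\delta(h,0)$; you have simply spelled out that computation via the double induction and then applied the binomial theorem as intended.
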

As mentioned in Section~\ref{sec-introduction}, Jayram \etal{} obtain
the lower bound of~$(1-2\delta)(7/3)^h$ from this corollary by observing that
$p^\delta_h \geq (1-2\delta) (1/3)^h$.

\section{Improved Lower Bounds}
\label{sec-lb}

\subsection{First Improvement}

In this section, we develop a method to enhance the
Jayram-Kumar-Sivakumar technique for establishing a lower bound
for~$\TMAJ$. The enhancement comes from an improved estimate
for~$p_h^\delta$, the minimum probability with which a decision tree
queries the absolute minority of an input drawn from the hard
distribution. 

\begin{theorem}
\label{thm:1level}
For every error $\delta > 0$ and height $h \geq 0$, we have 
$p^\delta_h \geq (1-2\delta) 2^{-h}$.
\end{theorem}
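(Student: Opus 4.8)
The plan is to establish the recursive inequality $p^\delta_h \ge \tfrac12\, p^\delta_{h-1}$ for every $h \ge 1$, together with the base case $p^\delta_0 \ge 1-2\delta$, and then iterate: induction gives $p^\delta_h \ge 2^{-h}\, p^\delta_0 \ge (1-2\delta)2^{-h}$. The base case is the standard argument: $\TMAJ_0$ is the identity on a single variable, $\cH_0 = \{0,1\}$, and that variable is always the absolute minority; if an algorithm queries it with probability only $\gamma$, then on a $1-\gamma$ fraction of its runs it answers by a fixed bit, which is wrong on one of the two inputs, contributing error at least $(1-\gamma)/2$, so $\gamma \ge 1-2\delta$.

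For the recursion, I would take an arbitrary $A \in \mathcal P^\delta_{\TMAJ_h}$ and build from it an algorithm $A' \in \mathcal P^\delta_{\TMAJ_{h-1}}$ whose probability of querying the absolute minority of a uniform $w \in_\rR \cH_{h-1}$ is at most twice the probability that $A$ queries the absolute minority of a uniform $x \in_\rR \cH_h$; minimizing over $A$ then yields $p^\delta_{h-1} \le 2\, p^\delta_h$. On input $w$, the algorithm $A'$ plants $w$ at a uniformly random one of the three top-level subtrees of a height-$h$ instance $x$, and fills the remaining two subtrees with independent uniformly random hard height-$(h-1)$ instances, one of $\TMAJ$-value $0$ and one of $\TMAJ$-value $1$, placed in a random order. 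Since $\TMAJ_h(x) = \MAJ(\TMAJ_{h-1}(w),0,1) = \TMAJ_{h-1}(w)$, simulating a random $B \in_\rR A$ on $x$ — forwarding only those queries of $B$ that land inside the planted copy of $w$, the other two subtrees being known to $A'$ — and returning $B(x)$ gives a valid $\delta$-error algorithm for $\TMAJ_{h-1}$. A direct computation shows that, as $w$ ranges over $\cH_{h-1}$ and the planting is randomized, the instance $x$ is distributed exactly as a uniform element of $\cH_h$, and the planted copy of $w$ always lands in one of the two \emph{majority} subtrees of $x$, uniformly at random between them.

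Hence $A'$ queries the absolute minority of $w$ precisely when $B$ queries the absolute minority of a uniformly random majority subtree of $x$, and the inequality we need becomes: for every $A \in \mathcal P^\delta_{\TMAJ_h}$,
\[
\Exp_{x \in_\rR \cH_h,\ B \in_\rR A}\Bigl[\ \sum_{k \text{ a majority subtree of } x} \mathbf 1[\, B \text{ queries } \mu_k(x)\,]\ \Bigr] \ \le\ 4\ \Exp_{x \in_\rR \cH_h,\ B \in_\rR A}\bigl[\,\mathbf 1[\, B \text{ queries } m(x)\,]\,\bigr],
\]
where $\mu_k(x)$ is the absolute minority of the $k$-th top-level subtree of $x$ (so $\mu_k(x)=m(x)$ when $k$ is the minority subtree). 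Since $\cH_h$ and the event that an algorithm queries the absolute minority are invariant under the automorphism group of the ternary tree and under global complementation, one may assume $A$ itself is invariant under these operations; then the three events ``$B$ queries $\mu_k(x)$'', $k \in \{1,2,3\}$, become equiprobable, and the displayed inequality collapses to the single statement that, conditioned on $B$ querying $\mu_k(x)$ for a fixed $k$, subtree $k$ is the minority subtree with probability at least $1/5$ (unconditionally this probability is $1/3$). This is the step I expect to be the real work, and it is exactly where ``understanding all decision trees on $3$ variables'' comes in: sampling the hard distribution by first fixing the shapes of the three subtrees and then the triple of their $\TMAJ$-values, one views $B$'s decisions about the three subtrees as a decision tree on three bits and verifies the bound over the finitely many such trees. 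The main obstacle is to make this enumeration genuinely exhaustive — in particular to rule out strategies that manage to probe the majority subtrees down to their absolute minorities much more often than the minority subtree — rather than merely checking the ``directional'' strategies one writes down first.
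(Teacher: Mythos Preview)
Your reduction is \emph{not} the paper's. The paper encodes a height-$(h-1)$ input into a height-$h$ input at the \emph{leaf} level: each bit $y_i$ is blown up into a three-leaf block with majority $y_i$. Within an equivalence class of $\cH_h$ (inputs agreeing everywhere except on the three siblings of $m(x)$), all but three leaves are literally fixed, so the restriction of $B$ to those three leaves \emph{is} a decision tree on three bits, and the required inequality is a genuine finite check. Your construction instead plants $w$ as a \emph{top-level subtree}, and the inequality you need compares the probabilities of hitting the absolute minorities $\mu_k(x)$ of the three depth-one subtrees.

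The gap is in your last paragraph: conditioning on the ``shapes'' of the three subtrees does \emph{not} reduce the question to a decision tree on the three bits $(v_1,v_2,v_3)$. After fixing shapes, each leaf value is indeed determined by some $v_k$, but the event ``$B$ queries $\mu_k(x)$'' concerns one \emph{particular} leaf in subtree~$k$, not whether $B$ ``learns'' $v_k$; and $B$ does not know the shapes, hence does not know which leaf is $\mu_k$. Concretely, take $h=2$ and let $B$ be the directional algorithm that evaluates subtree~3 by querying $x_7,x_8$ (then $x_9$ only if they disagree), then subtree~2 by querying $x_4,x_5$ (then $x_6$ if needed), and finally all of subtree~1 if $v_2\neq v_3$. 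Fix the shapes so that subtrees~2 and~3 are each $\{001,110\}$ (hence $\mu_2=6$, $\mu_3=9$, and $x_4=x_5$, $x_7=x_8$ always) and subtree~1 is $\{010,101\}$. For this shape one gets $Q_2\equiv Q_3\equiv 0$ and $Q_1=\mathbf 1[v_2\neq v_3]$; but whenever subtree~1 is the minority we have $v_2=v_3$, so $Q_{\min}\equiv 0$ while $\sum_{k\text{ maj}}Q_k$ has positive expectation. Your shape-conditioned inequality therefore fails, and the proposed finite enumeration cannot establish it. (Averaged over shapes the inequality does hold for this $B$, so it may well be true globally, but that would require a different argument---closer in spirit to the JKS recursion---not a three-bit case check.)
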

\begin{proof}
We prove this theorem by induction. Clearly, $p^\delta_0 \geq 1-2\delta$.
It then
suffices to show that $2 p^\delta_h \geq p^\delta_{h-1}$ for $h \geq 1$.
We do so by reduction as follows: let~$A$ be a randomized algorithm that
achieves the minimal probability $p^\delta_h$ for height~$h$ formulae.  We
construct a randomized algorithm $A'$ for height $h-1$ formulae such that the
probability that  $A'$ errs is at most $\delta$, and $A'$
queries the absolute minority with probability at most $2 p^\delta_h$.
Since $p^\delta_{h-1}$ is the minimum probability of querying the
absolute minority in the hard distribution, computed over all randomized
algorithms on inputs of height $h-1$ with error at most $\delta$,
this implies that $2 p^\delta_h \geq p^\delta_{h-1}$.

We now specify the reduction.  For the sake of simplicity, we omit
the error $\delta$ in the notation.  We use the following definition:
\begin{definition}[One level encoding scheme]
  A {\em one level encoding scheme\/} is a bijection $\psi: {\cal
    H}_{h-1} \times \{1, 2, 3\}^{3^{h-1}} \rightarrow {\cal H}_{h} $, such
  that for all $(y,r)$ in the domain, $\TMAJ_{h-1}(y) = \TMAJ_h(\psi(y,r))$.

  Let $c : \{0,1\} \times \{1,2,3\} \rightarrow {\cal H}_1$ be a
bijection
  satisfying $b = \MAJ(c(b,s))$ for all inputs $(b, s)$.  Define the
  one level encoding scheme $\psi$ {\em induced} by $c$ as follows:
  $\psi(y, r) = x \in \cH_h$ such that for all $1\leq i \leq 3^{h-1}$,
$
(x_{3i-2}, x_{3i-1}, x_{3i}) = c(y_i, r_i)
$.
\end{definition}
 
To define $A'$, we use the one level encoding scheme $\psi$ induced by
the following function: $c(y,1) = y01$, $c(y,2) = 1y0$, and $c(y,3) = 01y$.

On input $y$, algorithm $A'$ picks a uniformly random string $r \in
\{1, 2, 3\}^{3^{h-1}}$, and runs $A$ on $x =
\psi(y\mathrm{,} r)$ and computes the same output. Notice that each bit of $x_i$ of $x$ is either determined
by $r$ alone or else it is $y_{\lceil i/3\rceil}$. When $A$ asks for a bit of
$x$ that is determined by $r$, then this value is ``hard wired'' in $A'$ and
$A'$ makes no query. When $A$
asks for a bit of $x$ that is not determined by $r$, then and $A'$
queries the corresponding bit of $y$.
Observe that $A'$ has error at most $\delta$
as $\TMAJ_{h-1}(y) = \TMAJ_h(\psi(y, r))$ for all $r$, and $A$ has
error at most $\delta$.  We claim that
\begin{equation}
\label{eq:iosjfbn}
2 \Pr_{B  \in_\rR A,\; x \in_\rR \cH_h}[B(x)\text{ queries } x_{m(x)}] \quad \geq \quad
\Pr_{B  \in_\rR A,\; (y, r) \in_\rR \cH'_h}[B'(y, r)\text{ queries }y_{m(y)}]
\enspace,
\end{equation}
where~$\cH_h'$ is the uniform distribution 
over~$\mathcal{H}_{h-1} \times \{1,2,3\}^{3^{h-1}}$
and $B'$ is the algorithm that computes $x = \psi(y,r)$ and then evaluates $B(x)$.
We prove this inequality by taking an appropriate partition of the
probabilistic space of hard inputs ${\cal H}_{h}$, and prove
Eq.~(\ref{eq:iosjfbn}) separately, on each set in the partition.  For
$h=1$, the two classes of the partition are ${\cal H}_1^0$ and ${\cal
  H}_1^1$ .  For $h > 1$, the partition consists of the equivalence
classes of the relation $\sim$ defined by $x \sim x'$ if $x_i = x'_i$
for all $i$ satisfying $P(i) \neq P(m(x))$ in the tree $T$.

Because $\psi$ is a bijection, observe that this also induces a
partition of $(y, r)$, where $(y, r) \sim (y', r')$ if and only if $\psi(y, r)
\sim \psi(y', r')$.
Also observe that every equivalence class contains three elements.
Then Eq.~(\ref{eq:iosjfbn})
follows from the following stronger statement: for every equivalence class
$S$, and for all $B$ in the support of $A$, it holds that
\begin{equation}
\label{eq:naiweq}
2 \Pr_{x \in_\rR \cH_h } [B(x)\text{ queries } x_{m(x)} \mid x \in S ] 
\quad \geq \quad
\Pr_{(y,r) \in_\rR \cH_h'}[B'(y,r) \text{ queries }y_{m(y)} \mid \psi(y, r) \in S]
\enspace.
\end{equation}

The same proof applies to all sets $S$, but to simplify the notation,
we consider a set $S$ that satisfies the following: for $x \in S$, we
have $m(x) \in \{1,2,3\}$ and $x_{m(x)} = 1$.  Observe that for
each $j>3$, the $j$th bits of all three elements in $ S$ coincide.
Therefore, the restriction of $B$ to the variables $(x_1, x_2, x_3)$,
when looking only at the three inputs in $S$, is a well-defined
decision tree on three variables.  We call this restriction $C$, and formally it is
defined as follows: for each query $x_j$ made by $B$ for $ j >3$,
$C$ simply uses the value of $x_j$ that is
shared by all $x \in S$ and that we hard-wire into $C$; for each
query $x_j$ made by $B$ where $j \in 
\{1, 2, 3\}$, $C$ actually queries $x_j$.  
Note that the restriction $C$ does not necessarily compute
$\MAJ_1(x_1,x_2,x_3)$, for two reasons. Firstly, $C$ is derived
from $B$, which may err on particular inputs. But even if $B(x)$
correctly computes $\TMAJ_h(x)$, it might
happen that $B$ never queries any of $x_1, x_2, x_3$, or it might
query one and never query a second one, etc.

For any $x \in S$, recall that we write $(y, r)$ the unique solution of $\psi(y,r) = x$.  
It holds for our choice of $S$ that $m(y) = 1$ because we assumed $m(x)
\in \{1, 2, 3\}$ and also $y_1 = y_{m(y)} = 0$ because we assumed $x_{m(x)}
= 1$.

Observe that, for inputs $x \in S$, $B$ queries $x_{m(x)}$ if and only
if $C$
queries the minority among $x_1, x_2, x_3$.  Also, $B'(y, r)$ queries
$y_{m(y)}$ if and only if $C(\psi(0, r_1))$ queries $x_{r_1}$ (cf.\
definition of $c$).  Furthermore,
the distribution of $x_1 x_2 x_3$ when $x \in_\rR S$ is 
uniform over~$\mathcal{H}^0_1$.  Similarly, the
distribution of $r_1$ over uniform $(y, r)$ conditioned on $\psi(y, r)
\in S$ is identical to that of~$(0, r_1) = \psi^{-1}(x_1 x_2 x_3)$ for 
$x_1 x_2 x_3 \in_\rR \mathcal{H}^0_1$.
Thus Eq.~(\ref{eq:naiweq}) is equivalent to:
\begin{equation}
\label{eq:ijsfao} 
\Pr_{x \in_\rR {\cal H}_1^0}[C(x) \text{ queries }
x_{r_1}\text{ where } \psi(0, r_1) = x]  
    \quad \le \quad 2 \Pr_{x \in_\rR {\cal H}_1^0} 
                    [C(x)\text{ queries } x_{m(x)}]
    \enspace.
\end{equation}

In principle, one can prove this inequality by considering all the (finitely
many) decision trees $C$ on three variables. We present here a somewhat more
compact argument.

If $C$ does not query any bit, both sides of Eq.~(\ref{eq:ijsfao}) 
are zero, so the inequality holds.
We therefore assume that $C$ makes at least one query and, without loss of
generality, we also assume that the first query is $x_1$. We distinguish
two cases.
 
 If $C$ makes a second query when the first query is evaluated to $0$
 then the right hand side of Eq.~(\ref{eq:ijsfao}) is at least $4/3 = 2 \cdot
 (1/3 + 1/3)$ because
 there is a $1/3$ chance that the first query is $m(x)$ and $1/3$
 chance that the second is $m(x)$. But the left hand side is at most $1$,
 and therefore the inequality holds.

 If $C$ does not make a second query when the first query is evaluated
 to $0$ then the left hand side is at most $2/3$ since for $x=010$, we
 have $r_1 = 3$, but $x_3$ is not queried. With probability $1/3$ we have
 $m(x)=1$, so the right hand side is at least $2/3$. We conclude that
 Eq.~(\ref{eq:ijsfao}) holds for every $C$.

We remark that the decision tree algorithm making no queries is not the only
one that makes Eq.~(\ref{eq:ijsfao}) hold with equality. Another such algorithm
is the following:
first query $x_1$, if $x_1 = 0$, stop, else if $x_1 = 1$, query $x_2$ and stop.

 To handle a general $S$, we replace $\{1, 2, 3\}$ with $m(x)$ and its
 two siblings.  For $S$ such that $x \in S$ satisfies $x_{m(x)} = 0$, the
 optimal algorithm $C'$ is the same as the one described above, except
 that each $0$ is changed to $1$ and vice versa.

 Therefore Eq.~(\ref{eq:ijsfao}) holds for every $C$, which implies
 the theorem.

\end{proof}
Combining Corollary~\ref{cor: jks} and Theorem~\ref{thm:1level}, we obtain 
the following lower bound.
\begin{corollary}
\label{cor: main}
$\rR_\delta(\TMAJ_h) \geq (1-2\delta) (5/2)^h$.
\end{corollary}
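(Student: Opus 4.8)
The plan is to derive Corollary~\ref{cor: main} purely as a consequence of the two results just established: Theorem~\ref{thm:1level}, which gives the new bound $p^\delta_h \geq (1-2\delta)2^{-h}$, and Corollary~\ref{cor: jks}, which converts any geometric lower bound $p^\delta_i \geq a\cdot q^i$ into the worst-case bound $\rR_\delta(\TMAJ_h) \geq a(2+q)^h$. So the proof is a one-line substitution rather than a fresh argument.

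Concretely, first I would observe that Theorem~\ref{thm:1level} is exactly of the form required by Corollary~\ref{cor: jks} with the parameter choices $a = 1-2\delta$ and $q = 1/2$: indeed the hypothesis $p^\delta_i \geq a q^i$ reads $p^\delta_i \geq (1-2\delta)2^{-i}$ for every $i \in \{0,1,\dots,h\}$, which is precisely what Theorem~\ref{thm:1level} asserts (applied at each height $i \leq h$, noting the theorem is stated for all heights). The only minor point to check is that $a,q>0$, which holds whenever $\delta<1/2$; for $\delta \geq 1/2$ the statement $\rR_\delta \geq (1-2\delta)(5/2)^h$ is vacuous since the right-hand side is non-positive, so there is nothing to prove in that range.

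Then I would simply plug these values into the conclusion of Corollary~\ref{cor: jks}: $\rR_\delta(\TMAJ_h) \geq a(2+q)^h = (1-2\delta)(2+\tfrac12)^h = (1-2\delta)(5/2)^h$, which is the claimed bound. No genuine obstacle arises here; the entire content of the improvement was already absorbed into proving Theorem~\ref{thm:1level} (the case analysis on decision trees on three variables) and into the Jayram--Kumar--Sivakumar recurrence underlying Corollary~\ref{cor: jks}. If anything, the only thing worth spelling out for the reader is why Theorem~\ref{thm:1level} may be invoked simultaneously at all heights $i \le h$ rather than just at $h$ — namely that its statement is quantified over all heights — but this is immediate from the phrasing ``for every height $h \geq 0$'' in that theorem.
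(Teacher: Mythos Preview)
Your proposal is correct and matches the paper's approach exactly: the paper simply states that the corollary follows by combining Corollary~\ref{cor: jks} with Theorem~\ref{thm:1level}, which is precisely the substitution $a=1-2\delta$, $q=1/2$ you describe. Your additional remark about the $\delta \ge 1/2$ case being vacuous is a harmless elaboration not present in the paper.
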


\subsection{Further Improvement}
\label{sec:lb}

The proof of Theorem~\ref{thm:1level} proceeds by proving a recurrence,
using a one level encoding scheme, for the minimal probability~$p^\delta_h$ 
that an algorithm queries the absolute minority bit. It is natural to ask 
whether this is the best possible recurrence.  In this section, we show
that it is indeed possible to improve the recurrence by using 
higher level encoding schemes. 

In the following, we sometimes omit the error parameter~$\delta$ from 
the notation in the interest of readability. Let ${\cal R}= \{0,1\}
\times\{1,2,3\}$, ${\cal R}^{(1)}_h =
{\cal R}^{3^{h-1}}$, and~${\cal R}^{(k)}_h = {\cal R}^{(k-1)}_h
\times{\cal R}^{3^{h-k}}$.

\begin{definition}[Uniform $k$-level encoding scheme]
Let $c:\{0,1\}\times{\cal R}\to{\cal H}_1$ be the function given by $c(y,(b,1))
= yb(1-b)$, $c(y,(b,2)) = (1-b)yb$, and $c(y,(b,3)) = b(1-b)y$.
The {\em uniform $k$-level encoding scheme\/} $\psi^{(k)}$, for an
integer~$k \ge 1$ is defined by the following recursion:
\begin{enumerate}
\item For $h\ge1$, $y\in\cH_{h-1}$ and $r\in{\cal R}^{(1)}_h$ we
set $\psi^{(1)}(y,r) = x \in \cH_h$ such that $(x_{3i-2}, x_{3i-1}, x_{3i}) = c(y_i, r_{i})$, for all $1\leq i \leq 3^{h-1}$;
\item for $h\ge k>1$, $y\in\cH_{h-k}$ and $(R,r)\in{\cal R}^{(k)}_h$ we set
$\psi^{(k)}(y,(R,r))=\psi^{(k-1)}(\psi^{(1)}(y,r),R)$.
\end{enumerate}
\end{definition}

The uniform $2$-level encoding scheme is illustrated in
Figure~\ref{2-level}.
\begin{figure}[h]
\centerline{\includegraphics[height=5cm]{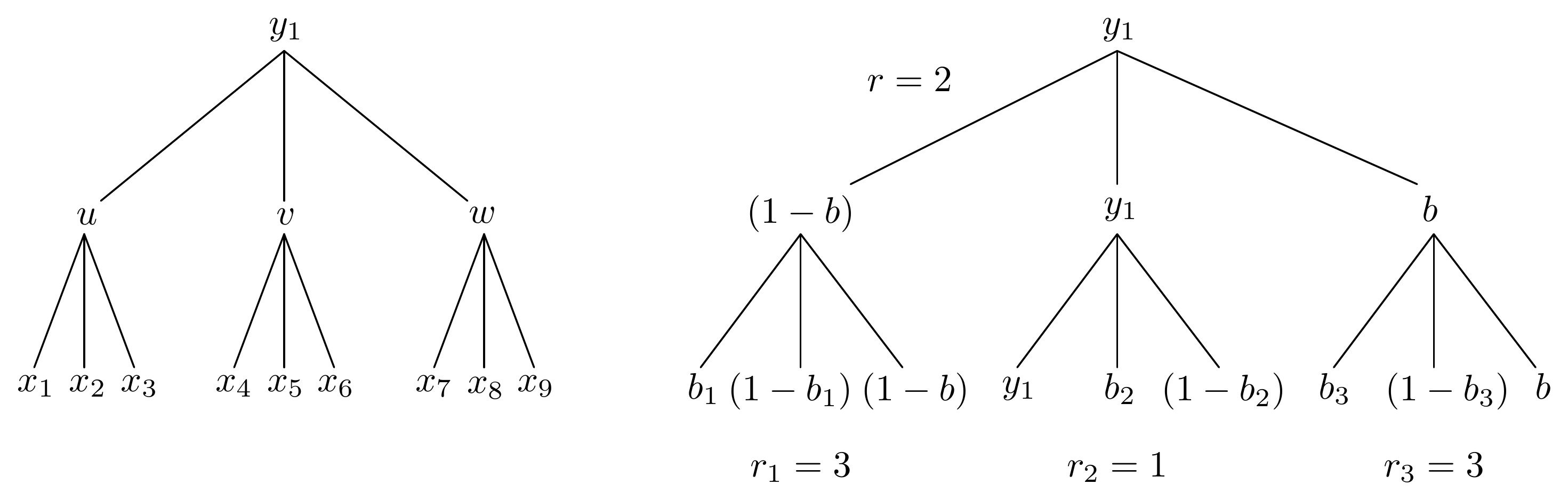}}
\caption{In a uniform 2-level encoding, a bit $y_1$ is encoded as the
height-2 recursive majority of~$9$ bits $x_1 x_2 \dotsb x_9$.
On the right hand side, we give an example of the~$9$ bits with 
specific choices of~$r,r_1,r_2,r_3$ for the two levels. In this example,
when $y_1=0$, then $b_3$ encodes the absolute minority bit
if $b= 1$ and~$b_3=0$.
\label{2-level}}
\end{figure}
This encoding is no longer a bijection. However,
one can make essentially the same argument as Theorem~\ref{thm:1level}. The
advantage of this scheme over the one used earlier is the higher
symmetry: while in the previous encoding, the instances in~$\cH_h$ are
related by cyclic permutations of triples of three siblings, now the entire 
symmetric group acts on them. Because of this higher symmetry if one of 
three siblings has been queried, the remaining two still play symmetric roles.

We later use the following observations that hold for all $h\ge k\ge1$:
\begin{enumerate}
\item For all $y\in\cH_{h-k}$ and $r\in{\cal R}^{(k)}_h$ we have
$\TMAJ_{h-k}(y) = \TMAJ_h(\psi^{(k)}(y,r))$.
\item For $(y,r)\in_\rR\cH_{h-k}\times\mathcal R^{(k)}_h$ the
value~$\psi^{(k)}(y,r)$ is distributed uniformly in ${\cal H}_{h}$.
\item For each $r\in\mathcal R^{(k)}_h$ and index $i$ in the range $1\le
i\le3^{h-k}$ there is a unique index $q_i(r)$ in the range $(i-1)3^k+1\le
q_i(r)\le i3^k$
such that for all $y\in\cH_{h-k}$ we have $x_{q_i(r)}=y_i$, where
$x=\psi^{(k)}_h(y,r)$. If $1\le j\le3^h$ but $j$ is not equal to $q_i(r)$
for any $i$, then $x_j$ is independent of the choice of $y$. We call these
bits of $x$ the {\em fixed bits}.
\end{enumerate}

We use the uniform $k$-level encoding schemes to obtain better bounds on
$p_h^\delta$. The argument is very similar to the argument in
Theorem~\ref{thm:1level}. We
start with proving a lower bound on $p^\delta_h$ based on a parameter
computable by considering all the (finitely many) decision trees
acting on inputs from $\cH_k^0$. Then we proceed to computing
this parameter. The high symmetry helps in reducing the number of cases 
to be considered, but as $k$ grows the
length of the calculation increases rather rapidly. We explain the
basic structure of the calculation and also include a short Python
program implementing it in Appendix~\ref{code}. 
As an illustration, we do the calculation for~$k= 2$ without the use
of a computer. For $k=3,4$ we include the results of the program.
A much more efficient algorithm would be needed to make the calculation 
for $k=5$ feasible.

Let us fix $k\ge1$ and let $C$ be a deterministic decision tree
algorithm on inputs of length $3^k$ that queries at least one
variable. We define
\[
\alpha_C \quad = \quad 
    \frac{ \Pr_{{x\in_\rR \cH_k^0, (y, r)\in_\rR \psi^{-1}(x)}}[C(x)
               \text{ queries }x_{q_1(r)}]}
         {\Pr_{x\in_\rR \cH_k^0}[C(x)\text{ queries } x_{m(x)}]}\enspace,
\]
where $\psi=\psi^{(k)}_k$. Since~$C$ queries at least one bit,
neither the numerator nor the denominator is zero. So
$\alpha_C$ is well defined and positive. We emphasize that~$\alpha_C$
does not depend on the output of $C$, it depends only on which input
bits $C$ queries. We further define
\[
\alpha_k \quad = \quad \max_C \; \alpha_C \enspace,
\]
where the maximum extends over all deterministic decision trees~$C$ on
$3^k$ variables that query at least one variable.

\begin{theorem}
\label{thm:klevel}
For every $k\ge1$, $h\ge0$ integers and $\delta\ge0$ real,  we have
\[
p^\delta_h \quad \geq \quad
    (1-2\delta) \left(\frac{\alpha_k}{2^k}\right) \, \alpha_k^{-h/k} \enspace.
\]
Therefore,
\[
\rR_\delta(\TMAJ_h) \quad \geq \quad
    (1-2\delta) \, \left( \frac{\alpha_k}{2^k} \right)\,
    \left( 2 + \alpha_k^{-1/k} \right)^h \enspace.
\]
\end{theorem}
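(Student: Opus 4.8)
The plan is to repeat the argument of Theorem~\ref{thm:1level} with the one-level encoding replaced by the uniform $k$-level encoding $\psi^{(k)}_h$, which encodes a height-$(h-k)$ instance into a height-$h$ instance and whose loss per step is captured exactly by $\alpha_k$. So the first step is to prove the recurrence $p^\delta_h \ge \alpha_k^{-1}\,p^\delta_{h-k}$ for $h \ge k$. Fix a $\delta$-error algorithm $A$ for $\TMAJ_h$ achieving the minimum $p^\delta_h$, and define $A'$ on an input $y \in \cH_{h-k}$ as follows: sample $r \in_\rR \cR^{(k)}_h$, put $x = \psi^{(k)}_h(y,r)$, and simulate $A$ on $x$, answering a query to a fixed bit of $x$ for free and answering a query to $x_{q_i(r)}$ by querying $y_i$. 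Since $\psi^{(k)}_h$ preserves the value, $A'$ computes $\TMAJ_{h-k}$ with error at most~$\delta$, so $A'$ queries $y_{m(y)}$ with probability at least $p^\delta_{h-k}$, while $A$ queries $x_{m(x)}$ with probability exactly $p^\delta_h$. Hence the recurrence follows once we show
\[
\Pr[\,B'(y,r)\text{ queries }y_{m(y)}\,] \ \le\ \alpha_k\cdot\Pr[\,B(x)\text{ queries }x_{m(x)}\,]\,,
\]
with $B\in_\rR A$, $x\in_\rR\cH_h$, $(y,r)\in_\rR\cH_{h-k}\times\cR^{(k)}_h$ and $B'$ the simulation above.

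I would prove this inequality class by class, exactly as in Theorem~\ref{thm:1level}. Partition $\cH_h$ according to the values of all leaves outside the subtree $Z(M(x)_{h-k})$ rooted at the depth-$(h-k)$ minority node, together with the value $\beta$ of that subtree; every class then has the same size, and inside a class only the $3^k$ leaves of $Z(M(x)_{h-k})$ vary, so a deterministic $B$ in the support of $A$, with the fixed bits hard-wired, restricts to a deterministic decision tree $C$ on $3^k$ variables. Using the recursive definition of $\psi^{(k)}$, inside a class the restriction of $\psi^{(k)}_h$ to $Z(M(x)_{h-k})$ is an independent copy of $\psi^{(k)}_k$ acting on the single bit $y_{m(y)}=\beta$; therefore $B'(y,r)$ queries $y_{m(y)}$ exactly when $C$ queries the message bit of this sub-encoding, while $B(x)$ queries $x_{m(x)}$ exactly when $C$ queries the absolute minority of the sub-instance, which runs uniformly over $\cH_k^\beta$ with the sub-randomness uniform. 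So the conditional ratio is precisely the quantity $\alpha_C$ (for $\beta=1$ one uses the bit-flipped tree, which does not change $\max_C\alpha_C$), hence at most $\alpha_k$. Since $\psi^{(k)}_h$ carries the uniform distribution to the uniform distribution, the classes and their $\psi^{(k)}_h$-preimages all have equal mass, so averaging the per-class inequalities gives the displayed inequality, and the recurrence follows.

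It remains to unroll the recurrence. Combined with the base bound $p^\delta_h \ge (1-2\delta)2^{-h}$ from Theorem~\ref{thm:1level}, it gives --- for $0 \le h < k$ directly and for $h \ge k$ by induction --- the claimed $p^\delta_h \ge (1-2\delta)(\alpha_k/2^k)\,\alpha_k^{-h/k}$; here one uses the fact that $\alpha_k \le 2^k$, which can be established by iterating the one-level bound ($\alpha_1 = 2$, analogously to the $k=1$ case of Theorem~\ref{thm:1level}) along the recursive decomposition of $\psi^{(k)}$ into a $\psi^{(1)}$-step followed by a $\psi^{(k-1)}$-step, or verified directly; in any event it must hold, since otherwise the claim already fails at $h=0$, where $p^\delta_0 = 1-2\delta$. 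Finally, the bound on $\rR_\delta(\TMAJ_h)$ follows from Corollary~\ref{cor: jks} applied with $a = (1-2\delta)\alpha_k/2^k$ and $q = \alpha_k^{-1/k}$, since the displayed lower bound on $p^\delta_i$ (applied to each $i \le h$) is exactly $p^\delta_i \ge a\,q^i$.

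The step I expect to be the main obstacle is the class-by-class analysis: one has to verify carefully that the recursive structure of $\psi^{(k)}$ genuinely makes the restriction of $\psi^{(k)}_h$ to $Z(M(x)_{h-k})$ a faithful, uniformly distributed copy of $\psi^{(k)}_k$ acting on the message bit $y_{m(y)}$, with the relevant conditional distributions uniform --- this is exactly where the higher symmetry of $\psi^{(k)}$ (the full symmetric group acting on triples of siblings, rather than only cyclic shifts) is essential, and it is what lets the entire per-class statement collapse to the single definition of $\alpha_C$. Establishing $\alpha_k \le 2^k$ cleanly in general is a secondary technical point, needed only to handle the heights that are not multiples of $k$.
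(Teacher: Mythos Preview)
Your proposal is correct and follows essentially the same approach as the paper's proof: both establish the recurrence $\alpha_k\,p_h^\delta\ge p_{h-k}^\delta$ via the uniform $k$-level encoding, partition $\cH_h$ into classes that fix everything outside the height-$k$ subtree containing the absolute minority, reduce the per-class inequality to $\alpha_C\le\alpha_k$ for the restricted decision tree $C$ (handling $\beta=1$ by symmetry), and then combine with Theorem~\ref{thm:1level} for the base heights $h<k$ and Corollary~\ref{cor: jks} for the final bound. The paper is equally terse about the auxiliary inequality $\alpha_k\le 2^k$, citing it ``from the proof of the theorem'' without further detail, so your flagging of this as a secondary technical point is apt.
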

\begin{proof}
We concentrate on the proof of the first statement; the second follows
from Corollary~\ref{cor: jks}.

The proof follows the same structure as that of Theorem~\ref{thm:1level}
but instead uses a depth-$k$ recursion: we show that 
\begin{equation}
\label{k-rek}
\alpha_k \, p_h^\delta \quad \ge \quad p_{h-k}^\delta
\end{equation}
if $h\ge k$. To bound $p_h^\delta$ in the base cases $h<k$, we
invoke Theorem~\ref{thm:1level}, i.e., $p_h^\delta\ge(1-2\delta)/2^h$,
and that $\alpha_k \leq 2^k$ (from the proof of the theorem).

It remains to prove Eq.~(\ref{k-rek}).
We proceed as in Theorem~\ref{thm:1level}: we consider a randomized
  $\delta$-error algorithm $A$ for $\TMAJ_h$ that achieves the minimum
  defining $p^\delta_h$ and construct a randomized
  algorithm $A'$ for $\TMAJ_{h-k}$ with the same error that queries the
  absolute minority of a uniformly random element of $\cH_{h-k}$ with
  probability at most $\alpha_k \, p^\delta_h$

To define $A'$, we use the uniform $k$-level encoding scheme $\psi=\psi^{(k)}$
(see Figure~\ref{2-level} for an illustration of the $k=2$ case). On 
input $y\in \cH_{h-k}$ the algorithm $A'$ picks a uniformly random 
element $r$ of ${\cal R}^{(k)}_h$ and simulates the decision tree
$A$ on input $x=\psi(y,r)$. Whenever~$A$ queries a fixed bit of $x$, 
the algorithm $A'$ makes no query, and when~$A$ queries a bit $x_{q_i(r)}$, 
$A'$ queries $y_i$.
Define $\cH_h'=\mathcal{H}_{h-k} \times {\cal R}^{(k)}_h$. Then $(y, r)\in
\cH_h'$ encodes $\psi(y,r)\in\cH_h$.

We partition ${\cal H}_h$ into equivalence classes, this time into sets
of size $3^l$ with $l=\sum_{i=0}^{k-1} 3^i$.  For $h=k$, the two classes 
are ${\cal H}_k^0$ and ${\cal H}_k^1$. For $h > k$, an equivalence class 
consists of inputs that are identical everywhere except in the
height-$k$ subtree containing their absolute minority. More formally,
recall that~$P(i)$ denotes the parent of a node~$i$ in a tree, and
let~$P^{(k)}$ denote the $k$-fold composition of~$P$ with itself.
In other words, $P^{(k)}(i)$ is the ancestor of the node~$i$ that 
is~$k$ levels above~$i$. The partition of~${\cal H}_h$ for~$h > k$
consists of the equivalence classes of the relation
defined as $x \sim x'$ iff~$x_i = x'_i$ for all $i$ satisfying $P^{(k)}(i) \neq
P^{(k)}(m(x))$ in the tree $\rT_h$.

Observe that the uniformity of the encoding implies that for every
equivalence class $S$, and all decision trees~$B$ in the support of $A$:
\begin{eqnarray*}
\Pr_{x\in_\rR \cH_h} [B(x)\text{ queries } x_{m(x)} \mid x\in S]
    & = & \Pr_{{(y, r)\in_\rR \cH_h', x=\psi(y, r)}}
          [B(x)\text{ queries } x_{m(x)} \mid x\in S] \enspace,
\end{eqnarray*}
and
\begin{eqnarray*}
\Pr_{(y, r)\in_\rR \cH_h'}
[B'(y,r) \text{ queries }y_{m(y)} | \psi(y, r) \in S]
    & = & \Pr_{{x\in_\rR \cH_h, (y, r)\in_\rR \psi^{-1}(x)}}
          [B'(y,r) \text{ queries }y_{m(y)} |x \in S] \enspace,
\end{eqnarray*}
where $B'$ is the algorithm that first computes $x = \psi(y, r)$
and then evaluates $B(x)$. We then prove that for every
equivalence class $S$, and all $B$ in the support of $A$, it holds that:
\begin{equation}
\label{eq:ingiaqp}
\alpha_k \Pr_{x\in_\rR \cH_h}[B(x)\text{ queries } x_{m(x)} \mid x\in S] 
    \quad \geq \quad
    \Pr_{(y, r)\in_\rR \cH_h'}
    [B'(y,r) \text{ queries }y_{m(y)} \mid \psi(y, r) \in S] \enspace.
\end{equation}
Proving Eq.~(\ref{eq:ingiaqp}) for all $B$ and $S$ finishes the proof 
of the theorem.

Let us fix $S$ and let $z$ be the undetermined part of the input, i.e., the
$3^k$ variables in the height-$k$ subtree containing the absolute
minority. Note that the set of possible values of $z$ is either
$\cH_k^0$ or $\cH_k^1$, depending on $S$. Now a deterministic decision
tree $B$ on inputs from $S$ can be considered a
deterministic decision tree $C$ for $z$. Indeed, the queries $B$ asks
outside $z$ have a deterministic answer in $S$ that can be hard wired
in $C$. In case $C$ asks no
queries at all, then Eq.~\ref{eq:ingiaqp} is satisfied with zero on
both sides of the inequality. Otherwise, if the possible
values of $z$ come from $\cH_k^0$, Eq.~(\ref{eq:ingiaqp}) follows
from $\alpha_C\le\alpha_k$ (which, in turn, comes from the definition
of $\alpha_k$ as a maximum). Finally if the possible values of $z$ are
the 1-hard inputs, Eq.~(\ref{eq:ingiaqp}) is satisfied by symmetry.
\end{proof}

To apply Theorem~\ref{thm:klevel} we need to compute (or estimate)
$\alpha_k$. For any fixed $k$ this is a finite computation, but it is
infeasible to do this by enumerating over all possible decision
trees~$C$ over~$3^k$ variables, even for small values of $k$.

For a fixed integer~$k \ge 1$ and real~$\alpha \ge 0$, we introduce a
function~$\rho_\alpha$ on decision trees on~$3^k$ variables:
\begin{equation}
\label{eq:nboqwihgaP}
\rho_\alpha(C) \quad = \quad
    \Pr_{x \in_\rR {\cal H}_k^0, (y, r)\in_\rR \psi^{-1}(x)}
    [C(x) \text{ queries }x_{q_1(r)}]
    - \alpha \Pr_{x \in_\rR {\cal H}_k^0}[C(x)\text{ queries } x_{m(x)}]
    \enspace.
\end{equation}
For the decision tree $C_0$ not querying any variables we have
$\rho_\alpha(C_0)=0$, for other decision trees $C$ we have
$\rho_\alpha(C)>0$ if and only if $\alpha_C>\alpha$. Thus, we have
$\alpha_k>\alpha$ if and only if there exists $C$ with
$\rho_\alpha(C)>0$. Finding the maximum, i.e., $\max_C\rho_\alpha(C)$
therefore answers the question whether $\alpha_k>\alpha$. We now focus
on maximizing~$\rho_\alpha$ for a given pair~$k,\alpha$. The advantage
of this approach lies in the linearity of $\rho_\alpha$, in a sense
that we clarify below. This makes it easier to maximize
$\rho_\alpha(C)$ than~$\alpha_C$ itself.

Let us call a bit of the hard input $x\in{\cal H}_k$
\emph{sensitive\/} if the flipping of this input bit flips the value of
$\TMAJ_k(x)$. Note that there are exactly $2^k$ such bits for each hard
input, these are the ones where all nodes on the root to leaf path
of the ternary tree evaluate to the same value.
 
Notice that
for a fixed $x\in{\cal H}_k^0$ and  $(y, r)\in_\rR \psi^{-1}(x)$ the
position $q_1(r)$ (where the $k$-level encoding
$\psi=\psi^{(k)}$ ``hides'' the input variable $y$) is uniformly
distributed over the $2^k$
sensitive positions. Thus, we can
simplify Eq.~(\ref{eq:nboqwihgaP}) defining $\rho_\alpha$ as follows:
\begin{equation}
\label{eq:simpler}
\rho_\alpha(C) \quad = \quad 2^{-k} \, \pi_q(C) -\alpha \, \pi_m(C)
\enspace,
\end{equation}
where $\pi_q(C)$ is the expected number of
sensitive bits queried
by $C(x)$ for $x\in_\rR {\cal H}_k^0$ and $\pi_m$ is the probability
that the absolute minority bit is queried by $C$ for $x\in_\rR{\cal H}_k^0$.

At any instant during the execution of a decision tree, we can partition
the input variables into those that have been queried and those that
have not. We call the set of pairs~$(x_i,a_i)$ of variables that have 
already been queried, along with their values, the {\em configuration\/}
of the decision tree at that instant.
The next action of the decision tree is either to stop (and produce an output
that is not relevant to this analysis) or to choose a variable that has not
yet been queried, and query it. In the latter case, the configuration 
after the query is determined by the value of the chosen variable.

A decision tree is determined by the actions it takes in the possible
configurations. In a configuration $\gamma$, a decision
tree that maximizes~$\rho_\alpha$ takes an action that maximizes the
linear combination in Eq.~(\ref{eq:simpler}) {\em conditioned on
reaching this configuration\/}. Namely it maximizes
\begin{equation}
\rho_\alpha(C,\gamma) \quad = \quad
    2^{-k} \, P_q(C,\gamma)- \alpha \, P_m(C,\gamma)
\enspace,
\end{equation}
where $P_q(C,\gamma)$ is the expected number of
sensitive bits
queried by $C(x)$, when $x$ is a uniformly random 0-hard $x$
consistent with $\gamma$, while $P_m(C,\gamma)$ is the probability that
$C(x)$ queries the absolute minority bit for a uniformly random 0-hard
$x$ consistent with $\gamma$. The optimal action in a configuration $\gamma$
can therefore be found independently of the actions taken at configurations 
inconsistent with $\gamma$. (A similar statement for the 
maximization of $\alpha_C$ is false.)

Note that $\rho_\alpha(C,\gamma)$ is easy
to compute if $C$ stops at $\gamma$. If $C$ queries a new variable
at $\gamma$, then $\rho_\alpha(C,\gamma)$ is given by a convex
combination of~$\rho_\alpha(C,\gamma')$ and~$\rho_\alpha(C,\gamma'')$, where
$\gamma'$ and $\gamma''$ are the two configurations resulting from 
the query.
This leads to the following dynamic programming algorithm:
consider all configurations in an order in which evaluating further
variables yields configurations considered earlier. For each
configuration $\gamma$ we iterate through all actions to find an
optimal one and store the value
of $\rho_\alpha(C,\gamma)$ for the optimal $C$. We have
$\rho_\alpha(C)=\rho_\alpha(C,\emptyset)$, where $\emptyset$ is the
initial configuration (in which no variable has been queried).

The number of all possible configurations is $3^{3^k}$. This makes 
the above algorithm infeasible even for $k=3$. We reduce the number of
configurations considered significantly by appealing to simple 
properties of $0$-hard inputs:

\begin{enumerate}
\item We use the symmetries of the hard distribution. All the 
configurations~$\gamma$
in an orbit generated by the automorphisms of the ternary tree give rise
to the same value for~$\rho_\alpha(C,\gamma)$.
We consider only one configuration in each such equivalence class.

\item \label{item-unstable}
We single out two types of configurations in which an optimal action
is clear without the need for further computation. First,
if the configuration uniquely determines the value of the root (i.e., is
not consistent with a~$1$-input),
an optimal strategy is to stop. Second, if an unqueried variable is 
known not to be the
  absolute minority variable, querying it does not decrease the
objective function. We may thus assume that an optimal decision tree
queries this variable.

We implement the second type of action as follows.
The nodes in the path to the absolute minority evaluate
  alternately to 0 and 1. If a node at odd depth evaluates to 0 or
  at even depth evaluates to 1, then it is not on the absolute
  minority path. In this case, all variables in the subtree rooted at the node 
are queried by an optimal decision tree. If a
  node at odd depth evaluates to 1, or at even depth evaluates to 0,
  then its {\em siblings\/} are not on the absolute minority path. In
this case, the variables in the subtrees rooted at the siblings are queried
by an optimal decision tree.
\end{enumerate}

We call a configuration~$\gamma$ {\em unstable\/} if there is an action 
that an optimal decision tree may take in~$\gamma$ as described in 
point~\ref{item-unstable}. We call the configuration \emph{stable\/} 
otherwise. Note that the value of the root is not uniquely determined by
a stable configuration.
It suffices to store $\rho_\alpha(C,\gamma)$ for stable $\gamma$. If
$\rho_\alpha(C,\gamma)$ is needed for some unstable configuration
$\gamma$ we apply the above rules (possibly multiple times) until the
value of the root is determined, or we arrive at a stable configuration.
We then compute $\rho_\alpha(C,\gamma)$ using the appropriate stored 
values.

The lone stable configuration for height $0$ is~$\emptyset$,
the one in which the input variable has not been queried.
Consider a stable configuration for height-$k$ formulae, for~$k \ge 1$,
and the restrictions of the configuration to the subtrees rooted at
the three children of the root. Call the configuration obtained by
negating the values of the variables in a configuration its
\emph{dual\/}. It is straightforward to verify that either a 
restriction uniquely determines the value of the corresponding child, 
or it is a dual of a stable configuration for height~$(k-1)$.
No child of the root can be known to have value~$1$ in a stable
configuration and at most one of them can be known to have value~$0$.
So an equivalence class of stable configurations for height~$k$ is 
determined by a multiset of (equivalence classes of) stable 
configurations for height~$(k-1)$ of size~$2$ (when one child 
is known to have value~$0$) or of size~$3$ (when the values of the children
are all undetermined). This characterization gives us the following
recurrence relation for~$N_k$, the number of equivalence classes of
stable configurations for height $k$:
\[
N_k \quad = \quad {N_{k-1}+1\choose2}+{N_{k-1}+2\choose3} \enspace,
\]
with the initial condition~$N_0 = 1$. The recurrence gives us
\begin{eqnarray*}
N_1&=&2 \\
N_2&=&7 \\
N_3&=&112 \\
N_4&=&246,792 \qquad \textrm{and} \\
N_5&=&2,505,258,478,767,772 \enspace.
\end{eqnarray*}
This makes the dynamic programming approach for
optimizing~$\rho_\alpha(C)$ feasible for $k\le4$. This approach is
implemented by the Python program presented in Appendix~\ref{code}.
In order to avoid dealing with dual configurations, the Python program 
considers {\sf NOT-3-MAJ}, the recursive negated majority-of-three
function. This is the function computed by the a negated majority gate 
in every internal node of a complete ternary tree.

Recall that using an algorithm to maximize $\rho_\alpha(C)$
we can check whether $\alpha_k>\alpha$. Instead of a binary
search we find the exact value of $\alpha_k$ as follows. With little
modification, the algorithm we present for maximizing $\rho_\alpha(C)$ 
also produces the value~$\alpha_{C^*}$ for the optimal decision tree $C^*$.
We then start with an arbitrary $\alpha\le\alpha_k$, and repeatedly 
optimize~$\rho_\alpha$ updating the estimate $\alpha$ to the 
last value $\alpha_{C^*}$, until we find that $\max_C\rho_\alpha(C)=0$. 
This heuristic finds the maximum $\alpha_k$ in a finite 
number of iterations. Instead of bounding the number of iterations in 
general we mention that starting from the initial value $\alpha=0$ 
the heuristic gives us $\alpha_k$ in at most four iterations when
$k=2,3,4$. The computations show:
\begin{eqnarray*}
\alpha_1&=&2 \enspace,\\
\alpha_2&=&\frac{24}7 \enspace,\\
\alpha_3&=&\frac{12231}{2203} \enspace, \quad \textrm{and}\\
\alpha_4&=&\frac{2027349}{216164} \enspace.
\end{eqnarray*}

Using the value of $\alpha_4$, Theorem~\ref{thm:klevel} yields the
following bound.
\begin{corollary}
\label{thm:4level}
$$
\rR_\delta(\TMAJ_h)
    \quad \geq \quad (1/2-\delta) \left(2 +
                     \left( \frac{216164}{2027349}\right)^{1/4}
                     \right)^h
    \quad > \quad (1/2-\delta)2.57143^h \enspace.
$$
\end{corollary}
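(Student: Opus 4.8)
The plan is to instantiate Theorem~\ref{thm:klevel} at~$k = 4$, plugging in the value~$\alpha_4 = 2027349/216164$ produced by the dynamic-programming procedure described above (and implemented by the program in Appendix~\ref{code}). With this value, the second displayed inequality of Theorem~\ref{thm:klevel} becomes
\[
\rR_\delta(\TMAJ_h) \quad \geq \quad
    (1 - 2\delta)\left(\frac{\alpha_4}{2^4}\right)
    \left(2 + \alpha_4^{-1/4}\right)^h \enspace,
\]
so all that remains is to massage the two constants into the form asserted in the corollary.

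First I would check that the leading constant is at least~$1/2 - \delta$. Writing~$1 - 2\delta = 2(1/2 - \delta)$, this is equivalent to~$\alpha_4 \ge 2^3 = 8$, which holds since~$\alpha_4 \cdot 216164 = 2027349 > 1729312 = 8 \cdot 216164$. Next I would verify the numerical inequality~$\alpha_4^{-1/4} = (216164/2027349)^{1/4} > 0.57143$; equivalently, that the integer inequality~$57143^4 \cdot 2027349 < 216164 \cdot 10^{20}$ holds. Evaluating both sides exactly confirms it. The margin here is small (relative size on the order of~$10^{-5}$), so I would resist truncating to decimals and instead compare the exact integers. Combining the two observations yields
\[
\rR_\delta(\TMAJ_h) \quad \geq \quad
    (1/2 - \delta)\left(2 + \left(\frac{216164}{2027349}\right)^{1/4}\right)^h
    \quad > \quad (1/2 - \delta)\, 2.57143^h \enspace,
\]
which is the claim.

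The substitution and the two inequalities above are routine; the substance of the argument lies upstream, in establishing the exact value~$\alpha_4 = 2027349/216164$. That is where the characterization of stable configurations, the recurrence~$N_k = \binom{N_{k-1}+1}{2} + \binom{N_{k-1}+2}{3}$ (giving~$N_4 = 246792$), and the correctness of the dynamic program over the~$N_4$ equivalence classes all enter. Granting Theorem~\ref{thm:klevel} together with that computation, the only point in the corollary itself where any care at all is required is the very thin numerical gap in~$(216164/2027349)^{1/4} > 0.57143$.
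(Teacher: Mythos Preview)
Your proposal is correct and matches the paper's approach: the paper simply states that the corollary follows by plugging~$\alpha_4 = 2027349/216164$ into Theorem~\ref{thm:klevel}, and you have filled in the two routine verifications (that~$\alpha_4 \ge 8$ so the leading constant may be weakened to~$1/2-\delta$, and that~$(216164/2027349)^{1/4} > 0.57143$) that the paper leaves implicit.
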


\subsection{Analysis of the $2$-level encoding}

As an illustration of the use of higher level encodings we explicitly 
derive a second order recurrence for~$p_h^\delta$ using $2$-level
encodings. We fix $k=2$ and consider deterministic decision trees $C$ on
$9$ variables. We run these decision trees on inputs from ${\cal
  H}_2^0$.

From the proof of Theorem~\ref{thm:1level}, we have~$\alpha_C\leq 4$ for 
all $C$. We may verify that
$\alpha_{C_0}=3$ for the decision tree $C_0$ with the following strategy:
first query $x_1$, if $x_1=1$, stop, else if $x_1=0$, query $x_2$ and $x_3$;
then if $\MAJ(x_1,x_2,x_3)=0$, stop, else query all remaining bits and
stop.
These bounds show that $3\le\alpha_2\le4$, so it suffices to consider
$\rho_\alpha$ for the values of $\alpha$ in the range $[3,4]$.
We prove below that for these values of~$\alpha$ a single decision tree~$C'$
maximizes $\rho_\alpha(C)$ among the deterministic decision trees that
query at least one variable. This decision tree $C'$ is given in
Figure~\ref{fig:2partial}.
We state the optimality of $C'$ in the following lemma.

\begin{lemma}
\label{lem:2level}
Let $C$ be any deterministic decision tree on 9-bit inputs
that makes at least one query and let
$C'$ be the decision tree depicted in Figure~\ref{fig:2partial}.
Then for all $\alpha \in [3, 4]$, $\rho_\alpha(C)\leq\rho_\alpha(C')$.
\end{lemma}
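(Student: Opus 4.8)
The plan is to exploit the dynamic-programming structure already set up in the paper: since $\rho_\alpha(C)$ decomposes configuration-by-configuration via $\rho_\alpha(C,\gamma)$, and the optimal action in a configuration $\gamma$ depends only on the subtree of configurations reachable from $\gamma$, it suffices to verify that $C'$ takes an optimal action in every stable configuration it actually reaches. I would first reduce to stable configurations using the two simplification rules (point~\ref{item-unstable} in the preceding discussion): whenever the value of the root is forced, stopping is optimal, and whenever an unqueried variable is known not to be the absolute minority, querying it is optimal (it cannot decrease $\rho_\alpha$ since it only increases $\pi_q(C)$ while leaving $\pi_m(C)$ unchanged — note the $2^{-k}P_q$ term has positive coefficient). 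For $k=2$ we have $N_2 = 7$ equivalence classes of stable configurations for height~$2$, together with the $N_1 = 2$ stable classes for the height-$1$ subtrees and the $N_0 = 1$ class for leaves; these form a short explicit list up to the automorphism group of the ternary tree $\rT_2$, which for the root layer is $S_3$ and within each child is again $S_3$.

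Next I would walk down $C'$ (as depicted in Figure~\ref{fig:2partial}) and, at each stable configuration $\gamma$ it reaches, compare the value $\rho_\alpha(C',\gamma)$ against $\rho_\alpha$ of every alternative action — stopping, or querying any other still-unqueried variable. For a stopping action, $\rho_\alpha(C,\gamma) = 2^{-2}\cdot(\text{sensitive bits already queried in }\gamma) - \alpha\cdot[\text{absolute minority already queried in }\gamma]$, a constant easily read off from $\gamma$. For a querying action, $\rho_\alpha(C,\gamma)$ is the convex combination $\lambda\,\rho_\alpha(C,\gamma') + (1-\lambda)\,\rho_\alpha(C,\gamma'')$ with $\lambda$ the conditional probability (over uniform $0$-hard completions of $\gamma$) of the two outcomes; by symmetry of $\cH_2^0$ these probabilities are simple rationals. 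Because every resulting $\gamma', \gamma''$ is either forced or a stable configuration already treated, the whole comparison reduces to a bounded table of linear-in-$\alpha$ inequalities on $[3,4]$; each such inequality $a\alpha + b \le a'\alpha + b'$ is checked by evaluating at the two endpoints $\alpha = 3$ and $\alpha = 4$. I would organize the verification so that at each branch of $C'$ the claimed optimal action beats every competitor for all $\alpha\in[3,4]$, and conclude $\rho_\alpha(C')=\rho_\alpha(C',\emptyset)\ge\rho_\alpha(C,\emptyset)=\rho_\alpha(C)$ for every $C$ making at least one query, by the usual backward-induction (optimal-substructure) argument.

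The main obstacle I expect is not conceptual but organizational: making the case analysis genuinely exhaustive yet readable. One must be careful that the two simplification rules are applied correctly — in particular that "querying a known non-minority variable never hurts" is used to collapse away the many unstable configurations, so that only the $N_2 = 7$ stable ones (plus their $\le 2$-level refinements) need explicit treatment — and that the automorphism-orbit reduction is invoked consistently so no genuinely distinct configuration is missed. A secondary subtlety is the $\alpha$-dependence: since $\alpha_{C_0} = 3$ and $\alpha_2 = 24/7 \in (3,4)$, the optimal tree is genuinely $C'$ (not $C_0$) throughout $[3,4]$, and one must confirm that the crossover at the relevant branch lands outside $[3,4]$ or that the relevant inequalities are in fact strict/non-strict in the right direction at both endpoints. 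I would therefore present the argument as a guided traversal of $C'$ with a small table of endpoint evaluations, rather than a raw enumeration, and defer the fully mechanical check to the Python program of Appendix~\ref{code}, which implements exactly this dynamic program and independently certifies $\alpha_2 = 24/7$.
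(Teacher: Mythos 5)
Your plan is essentially the paper's own proof: reduce to stable configurations via the two rules of point~\ref{item-unstable}, exploit the configuration-wise decomposition $\rho_\alpha(C,\gamma)=2^{-2}P_q-\alpha P_m$ to compare actions locally, and verify that $C'$ wins each comparison over the whole interval $[3,4]$ using linearity in $\alpha$. The framework, the justification that querying a known non-minority bit never decreases $\rho_\alpha$, and the endpoint-checking trick are all correct and match the paper.

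What you have written, however, is a scheme rather than a proof: the entire substance of the paper's argument is the explicit enumeration of the (five nontrivial) stable height-$2$ configurations up to symmetry --- one evaluated $0$-majority clause with nothing or single $1$s elsewhere; one evaluated majority clause plus one extra $1$; a single $1$ in each of three, two, or one clause(s) --- together with the concrete values of $P_q$ and $P_m$ for stopping versus each inequivalent query (e.g.\ $\rho=1/2$ for stopping versus $\rho\le 1-\alpha/6$ for continuing in the first case, or $P_q=12/5$, $P_m=2/15$ versus $P_q=14/5$, $P_m=1/5$ in the two-clause case). None of these computations appear in your proposal, and they are where the real work (and the only possibility of error) lies; in particular, one of the comparisons genuinely depends on $\alpha$ ranging over $[3,4]$ rather than holding identically, which you anticipate but do not resolve. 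Deferring the check to the Python program of Appendix~\ref{code} also undercuts the point of Lemma~\ref{lem:2level}, which is precisely to give a computer-free verification for $k=2$. So: right approach, correct reductions, but the case analysis that constitutes the proof still needs to be carried out.
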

\begin{proof}
Recall that the action in a configuration $\gamma$ of the decision tree $C$ that maximizes
$\rho_\alpha(C)$  is the one that maximizes
$\rho_\alpha(C,\gamma)=2^{-2} P_q(C,\gamma)-\alpha P_m(C,\gamma)$. To
simplify notation, we write $\rho$, $P_q$ and $P_m$ for
$\rho_\alpha(C,\gamma), P_q(C,\gamma)$, and~$P_m(C,\gamma)$,
respectively, if the decision tree~$C$ and configuration $\gamma$ 
considered are clear from the context.

We call any set of
$3$ sibling nodes a {\em clause}, so that $\{1,2,3\}$, $\{4,5,6\}$ and
$\{7,8,9\}$ are clauses.
We say a clause is \emph{evaluated\/} if its majority is uniquely
determined by the configuration under consideration.

We argue that an algorithm that maximizes $\rho_\alpha(C)$ takes certain
actions, without loss of generality. We begin with three rules that are 
special cases of the general rules from Section~\ref{sec:lb} that we used 
to reduce the number of configurations considered.
\begin{enumerate}
\item If a bit is evaluated to~$0$, then evaluate all remaining bits 
in its clause.
\item If two bits in a clause are evaluated to $1$ (this is the minority clause), evaluate all remaining bits in the other clauses.
\item if two clauses have been evaluated to $0$, then stop.
\end{enumerate}

In what follows we systematically consider all stable configurations for
height~$2$ inputs, i.e., the ones in which the above three rules do not apply. 
For each such configuration, we determine what action(s) an optimal decision 
may take next, without loss of generality, in order to maximize $\rho_\alpha$.
\begin{enumerate}
\setcounter{enumi}{3}
\item \label{item:singlemaj1} A single majority ($0$) clause is
  evaluated and either no variables are evaluated in either of the
  other clauses or a single $1$ is evaluated in both the other
  clauses.  In this case, stopping is the best strategy.  Indeed,
  $m(x)$ has not been queried yet. Therefore if we stop,
  then $P_m=0$ and $P_q =2$. This gives $\rho=1/2$.
  But if $C$ continues by querying at least one more
  bit, then $P_m \ge 1/6$ or $P_m \ge 1/4$ (since there are either $6$ or
  $4$ remaining unqueried variables, respectively, and they are symmetric) and
  $P_q\leq 4$.  Therefore, $\rho=2^{-2}P_q-\alpha
  P_m\le1-\alpha/6\le1/2$ since $\alpha\geq 3$.

\item A single majority clause is evaluated and one more bit is evaluated to
  $1$, but nothing more.  There are $9$ inputs consistent with this 
configuration.  We argue that stopping is best,
  as in the previous case. The argument is more involved because
  there is no global symmetry between the unqueried variables. We 
  separately compare stopping with querying a variable inside or outside 
the untouched clause.

 If we stop, then
 $P_q = 2$ and $P_m = 0$, so we have $\rho= 1/2$.

  If we query a variable in the clause containing the evaluated bit~$1$,
  then there are $3$ consistent inputs in which the next queried bit is
  $m(x)$. So we have $P_m\ge1/3$ and $P_q\le4$. Since $\alpha \ge 3$,
$\rho \le0$.

  If we query a variable in the untouched clause, then there is $1$
  out of the $9$ consistent inputs for which this next queried
  variable is $m(x)$, making $P_m\ge1/9$. There are $4$ more
  consistent inputs for which this variable evaluates to $1$.  In this
  case we arrive at the configuration covered by item \ref{item:singlemaj1}
above. Using that rule, the algorithm stops, leaving 2 out the 4
sensitive bits unqueried. Thus, we have $P_q\le4-4/9\cdot2=28/9$ and
  $\rho\le 4/9$, which is lesser than the $1/2$ obtained if we stop.

\item A single $1$ has been evaluated in each of the three clauses and
  no other bit has been queried.  In this case reading another bit is
  the best strategy (the choice of which bit is unimportant because of
  symmetry).  Observe that no bit $0$ has been evaluated yet.
  Therefore if $C$ stops, we have $\rho=P_q=P_m=0$. If $C$ continues to query
  another bit, which bit it queries does not matter by symmetry. Then
the rest of
  the algorithm is determined by the earlier rules yielding $P_q=8/3$,
  $P_m=1/6$, and $\rho=2/3-\alpha/6\geq0$.

\item A single bit $1$ has been evaluated in each of two different
  clauses and the third clause is untouched. Then it is best to
  evaluate a bit of the third clause.  If $C$ stops we have $\rho=P_m=P_q=0$.

 If $C$ reads
  another bit in one of the clauses containing a single $1$ bit, then
  the rest of the decision tree algorithm is determined by the earlier rules
  and we get $P_q=14/5$ and $P_m=1/5$ with $\rho
  =7/10-\alpha/5$. Note that whether this option is better than
  stopping depends on the value of $\alpha$.
  
  If $C$ reads a bit in the untouched clause, then by using the rules
  already presented in the previous cases, we calculate $P_q=12/5$ and
  $P_m=2/15$, yielding $\rho=3/5-2\alpha/15$. This is
  more than both 0 and $7/10-\alpha/5$ in the entire
  range of $\alpha$ we are considering.

\item A single bit $1$ has been evaluated in one clause and no other
  clauses are touched. Then it is best to evaluate a bit of another
  clause.  If $C$ stops, then we have $\rho=P_m=P_q=0$.

  If $C$ evaluates another bit in the clause containing
  $1$, then the rest of $C$ is determined by earlier rules and we have
  $P_q=3$, $P_m=1/4$ and $\rho=3/4 - \alpha/4 \leq 0$.

  But if $C$ queries a bit in an untouched clause, then similar
  calculations yield $P_q=7/3$, $P_m=5/36$ and $\rho
  =7/12-5\alpha/36>0$, making this the best choice.
\end{enumerate}

Following all the above rules and always choosing the smallest index when
symmetry allows us to choose, we arrive at a well defined decision tree,
namely~$C'$. This finishes the proof of the lemma.
\end{proof}

\begin{figure}
\begin{center}
\includegraphics[width=.8\textwidth]{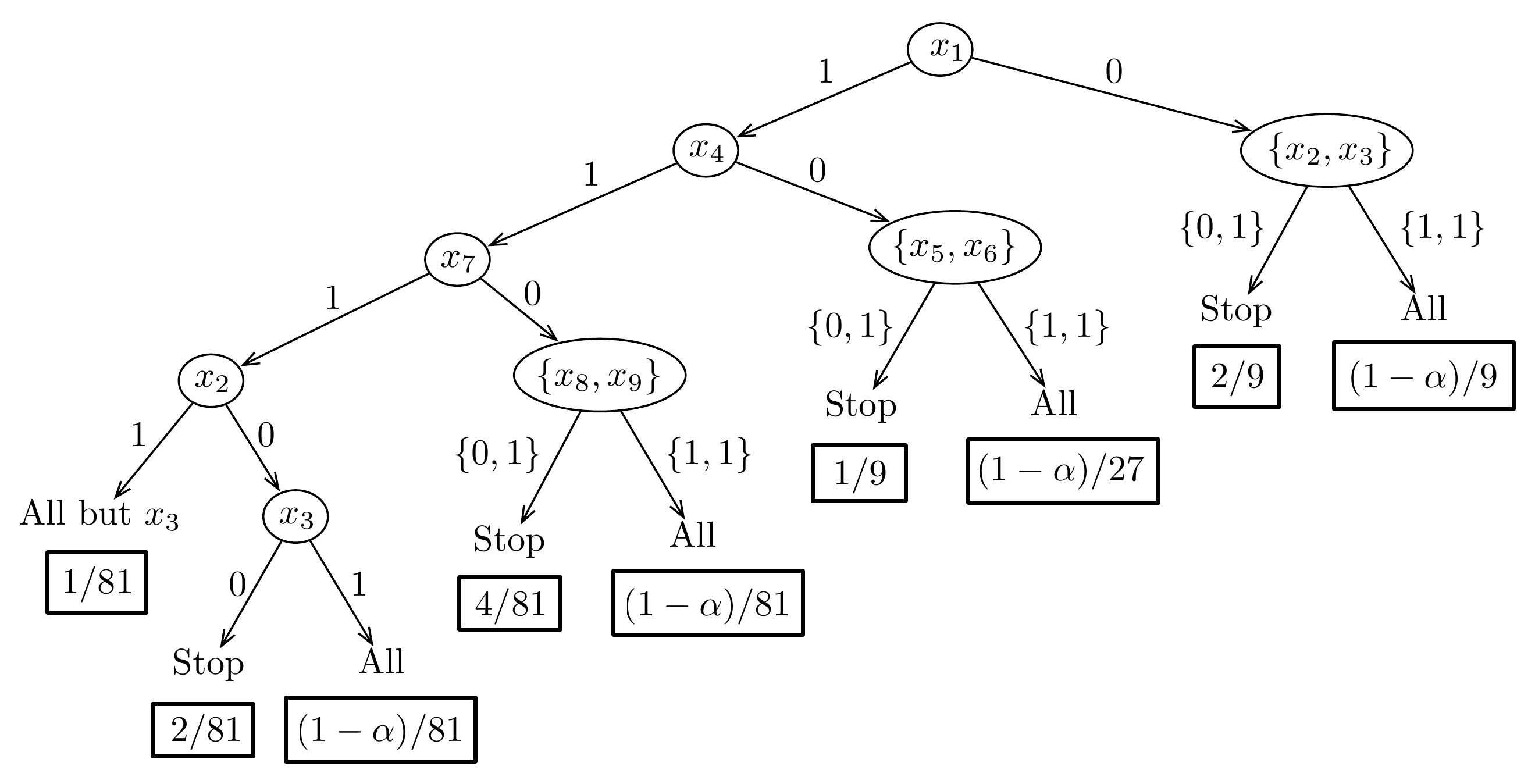}
\end{center}
\caption{A pictorial representation of the decision tree $C'$.
``Stop'' indicates that the algorithm stops and produces an output (that
is not relevant to our analysis). ``All'' indicates that the algorithm
queries all remaining variables. In the leftmost branch, 
all variables but~$x_3$ are queried.
The contribution to $\rho_{\alpha}(C')$ by each branch of $C'$ is written
under it in a box.
\label{fig:2partial}
}
\end{figure}

The above lemma immediately gives us the value of~$\alpha_2$.

\begin{theorem}
$\alpha_2=24/7$.
\end{theorem}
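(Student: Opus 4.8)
The plan is to read the value of $\alpha_2$ off Lemma~\ref{lem:2level} together with the elementary sandwich $3 \le \alpha_2 \le 4$ established just before it. Recall from the discussion preceding the lemma that for a fixed real~$\alpha$ one has $\alpha_2 > \alpha$ exactly when some decision tree~$C$ making at least one query satisfies $\rho_\alpha(C) > 0$, and that $\rho_\alpha(C) = 2^{-2}\,\pi_q(C) - \alpha\,\pi_m(C)$ is an affine function of~$\alpha$ that is strictly decreasing, since $\pi_m(C) > 0$ whenever $C$ makes a query. The crucial feature of Lemma~\ref{lem:2level} is that it exhibits a \emph{single} tree~$C'$ (that of Figure~\ref{fig:2partial}) maximizing $\rho_\alpha$ over the whole interval $\alpha \in [3,4]$ among trees making at least one query, independently of~$\alpha$. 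Hence, on this interval, the maximum of $\rho_\alpha(C)$ over such trees is $\rho_\alpha(C')$, and therefore $\alpha_2 > \alpha \iff \rho_\alpha(C') > 0$ for every $\alpha \in [3,4]$.

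Next I would deduce $\alpha_2 = \alpha_{C'}$. Since $\rho_\alpha(C')$ is affine and strictly decreasing in~$\alpha$, it has a unique root, which by definition of $\alpha_{C'}$ is $\alpha_{C'}$ itself; thus $\rho_\alpha(C') > 0 \iff \alpha < \alpha_{C'}$. One also checks that $\alpha_{C'} \in [3,4]$ --- for instance $\rho_3(C') \ge 0 \ge \rho_4(C')$, the left inequality because $C'$ dominates the explicit querying tree $C_0$ with $\alpha_{C_0} = 3$, the right because $\alpha_C \le 4$ for every~$C$. Feeding this into the previous paragraph, we get $\alpha_2 > \alpha \iff \alpha < \alpha_{C'}$ for all $\alpha \in [3,4]$; since both $\alpha_2$ and $\alpha_{C'}$ lie in $[3,4]$, a one-line contradiction argument (take an $\alpha \in [3,4]$ strictly between them, were they distinct) forces $\alpha_2 = \alpha_{C'}$.

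It then remains to compute $\alpha_{C'} = \pi_q(C') / \bigl(4\,\pi_m(C')\bigr)$, which is a finite matter. Each root-to-leaf branch of~$C'$ in Figure~\ref{fig:2partial} is followed with a definite probability when $x \in_\rR \cH_2^0$, and along it a known number of sensitive bits are queried and a known indicator records whether the absolute minority is queried; summing the boxed per-branch contributions in the figure gives $\rho_\alpha(C') = A - B\alpha$ with explicit rationals $A = 2^{-2}\pi_q(C')$ and $B = \pi_m(C')$. Solving $A = B\alpha$ yields $\alpha_{C'} = 24/7$, and the theorem follows.

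I expect no genuine obstacle to remain: all the substantive work is already in Lemma~\ref{lem:2level}, which pins down $C'$ as the \emph{uniform} maximizer on $[3,4]$. The only things to be careful about are the arithmetic bookkeeping when reading branch probabilities and sensitive-bit counts off Figure~\ref{fig:2partial}, and the mild logical point that knowing the maximum of $\rho_\alpha(C)$ only on the interval $[3,4]$ is enough precisely because the sandwich $3 \le \alpha_2 \le 4$ confines $\alpha_2$ to that interval.
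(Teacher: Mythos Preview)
Your proposal is correct and follows essentially the same approach as the paper: use the sandwich $3\le\alpha_2\le4$, invoke Lemma~\ref{lem:2level} to reduce the maximization of $\rho_\alpha$ on $[3,4]$ to the single tree~$C'$, and then read off $\alpha_2=\alpha_{C'}=24/7$ by computing $\rho_\alpha(C')$ from the per-branch contributions in Figure~\ref{fig:2partial} (the paper makes the arithmetic explicit, obtaining $\rho_\alpha(C')=(48-14\alpha)/81$). Your write-up is a bit more careful about the logical passage from ``$\rho_\alpha(C')>0\iff\alpha<\alpha_{C'}$ on $[3,4]$'' to ``$\alpha_2=\alpha_{C'}$'', but the substance is the same.
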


\begin{proof}
As observed in the paragraph before Lemma~\ref{lem:2level} we have
$3\le\alpha_2\le 4$. By the lemma we know that $\rho_\alpha(C)$ in
this range is maximized by
either $C'$ or the decision tree that does not query any variable. The
latter gives $\rho_\alpha=0$. We have $\alpha \ge  \alpha_2$ if and 
only if this maximum is $0$, so we are done
if we calculate $\rho_\alpha(C')$. The contribution of each branch of
the algorithm is given in Figure~\ref{fig:2partial}. Together these sum to
$(48-14\alpha)/81$. This is positive for $\alpha<24/7$, so we have
$\alpha_2=\alpha_{C'}=24/7$.
\end{proof} 

As a consequence of the value $\alpha_2$, we get a slightly weaker 
lower bound than the one in Corollary~\ref{thm:4level}.
However, this bound has the advantage that we have an explicit proof for it.

\begin{corollary}
$\rR_\delta(\TMAJ_h) \quad \geq \quad (1/2-\delta)
\left( 2+\sqrt{7/24\,}\right)^h \quad > \quad (1/2-\delta) \, 2.54006^h$.
\end{corollary}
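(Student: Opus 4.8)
The plan is to obtain this corollary as a direct consequence of Theorem~\ref{thm:klevel} instantiated at $k=2$, now that the value $\alpha_2 = 24/7$ has been pinned down by Lemma~\ref{lem:2level} and the subsequent theorem. First I would substitute $k=2$ and $\alpha_2 = 24/7$ into the second displayed inequality of Theorem~\ref{thm:klevel}. This gives
\[
\rR_\delta(\TMAJ_h)
  \quad \geq \quad
  (1-2\delta)\left(\frac{24/7}{4}\right)\left(2 + (24/7)^{-1/2}\right)^h
  \quad = \quad
  (1-2\delta)\cdot\frac{6}{7}\cdot\left(2 + \sqrt{7/24\,}\right)^h \enspace,
\]
using $\alpha_2/2^2 = 6/7$ and $\alpha_2^{-1/2} = \sqrt{7/24}$.

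Next I would weaken the leading constant to match the form of Corollary~\ref{thm:4level}: since $\delta \in [0,1/2]$ we have $1-2\delta \geq 0$ and $6/7 > 1/2$, so $(1-2\delta)\cdot\frac{6}{7} \geq \frac{1}{2}(1-2\delta) = 1/2 - \delta$. This yields the claimed bound $\rR_\delta(\TMAJ_h) \geq (1/2-\delta)\bigl(2+\sqrt{7/24}\,\bigr)^h$. Finally, for the numerical estimate, I would note that $7/24 = 0.291\overline{6}$ while $(0.54006)^2 = 0.2916648\ldots < 0.291\overline{6}$, so $\sqrt{7/24} > 0.54006$ and hence $2+\sqrt{7/24} > 2.54006$.

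There is essentially no obstacle here: all the substantive work—proving Theorem~\ref{thm:klevel} and computing $\alpha_2 = 24/7$—has already been done, and what remains is arithmetic. The only small points to verify are that the exponent convention is read correctly (with $k=2$ the base is $\alpha_2^{-1/k} = \sqrt{7/24} < 1$, not its reciprocal, consistent with the hypothesis $p_i^\delta \ge a\,q^i$ of Corollary~\ref{cor: jks}), and that passing from the constant $\frac{6}{7}(1-2\delta)$ down to $1/2-\delta$ is legitimate on the whole range $\delta \in [0,1/2]$, which it is since $6/7 > 1/2$.
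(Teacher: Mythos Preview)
Your proposal is correct and matches exactly what the paper intends: the corollary is stated without proof as an immediate consequence of Theorem~\ref{thm:klevel} with $k=2$ and the computed value $\alpha_2=24/7$, and your substitution, the weakening $(1-2\delta)\cdot\tfrac{6}{7}\ge \tfrac12-\delta$, and the numerical check are all sound. This is the same route the paper takes (implicitly) for both this corollary and Corollary~\ref{thm:4level}.
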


\section{Improved Depth-Two Algorithm}
\label{sec-algo}

In this section, we present a new zero-error algorithm for
computing~$\TMAJ_h$. For the key ideas behind it, we refer the reader to 
Section~\ref{sec-introduction}.

As before, we identify the formula~$\TMAJ_h$ with a complete ternary
tree of height~$h$.  We are given an assignment to the~$3^h$ variables
(equivalently, the leaves) of~$\TMAJ_h$, which may be accessed by
querying the variables. In the description of the algorithm we adopt the
following convention.  Once the algorithm has determined the value~$b$
of the subformula rooted at a node~$v$ of the formula~$\TMAJ_h$, we also
use~$v$ to denote this bit value~$b$.

The algorithm is a combination of two depth-2 recursive
algorithms. The first one, $\evaluate$ (see \algoone{}), takes a node $v$ of height
$h(v)$, and evaluates the subformula rooted at~$v$.  The interesting
case, when $h(v) > 1$, is depicted in Figure~\ref{fig-evaluate}.
The first step, permuting the input, means
applying a random permutation to the children $y_1, y_2, y_3$ of $v$
and independent random permutations to each of the three sets of
grandchildren.
\begin{algorithm} \footnotesize
\caption{\evaluate($v$): evaluate a node~$v$.}
\label{alg:algoone}
\begin{algorithmic}
\Input{Node~$v$ with subtree of height $h(v)$.}
\Output{the bit value~$\TMAJ_h(Z(v))$ of the
subformula rooted at~$v$}
\BlankLine
\State Let $h = h(v)$
\BlankLine
\If{$h = 0$} \Comment{First base case: $h = 0$ ($v$ is a leaf)}
    \State Query~$Z(v)$ to get its value~$a$;
    \Return $a$
\EndIf
\BlankLine

\State Let~$y_1, y_2, y_3$ be a uniformly random permutation 
of the children of~$v$ \Comment{$v$ has height $h \ge 1$}
\BlankLine

\If{$h = 1$}  \Comment{Second base case: $h = 1$}
    \State $\evaluate(y_1)$ and $\evaluate(y_2)$
    \If{$y_1 = y_2$}
        \Return $y_1$
    \Else\ 
        \Return $\evaluate(y_3)$
    \EndIf
\EndIf

\State \Comment{Recursive case: $v$ has height~$h \ge 2$; use the
attached figure as a guide}

\State Let~$x_1$ and~$x_2$ be chosen uniformly at random from the
children of~$y_1$ and~$y_2$, respectively
\BlankLine \\
\hspace*{8.5cm}\includegraphics[height=100pt]{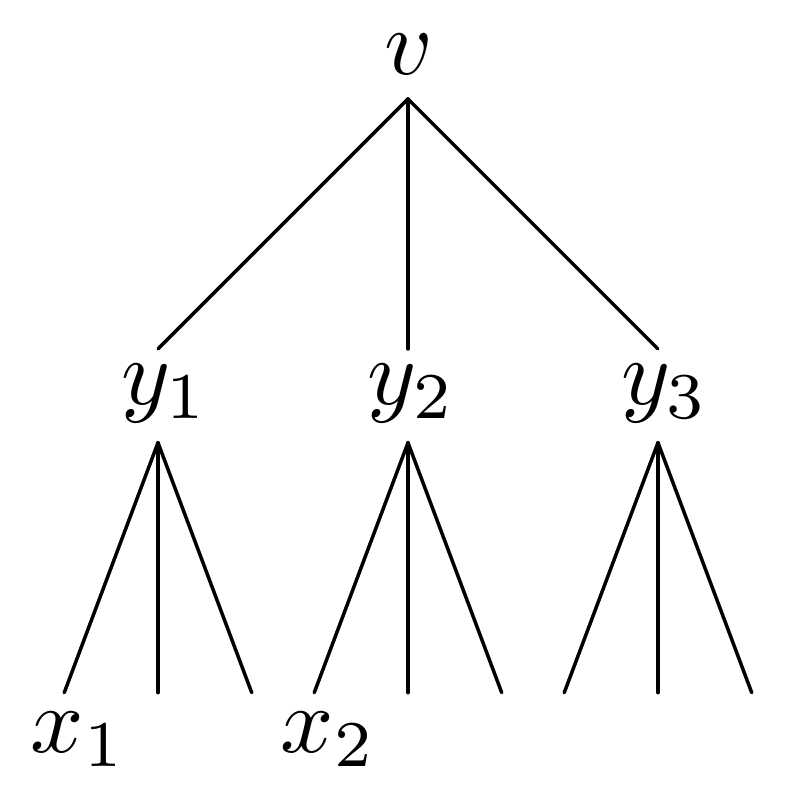}\vspace*{-3.5cm}

\State $\evaluate(x_1)$ and~$\evaluate(x_2)$
\BlankLine

\If{$x_1 \neq x_2$}
    \State $\evaluate(y_3)$
    \State Let $b \in \set{1,2}$ be such that $x_b=y_3$
    \State $\complete(y_b, x_b)$
    \If{$y_b=y_3$} 
        \Return  $y_b$ 
    \Else\ 
        \Return $\complete(y_{3-b}, x_{3-b})$
    \EndIf
\Else\ [$x_1=x_2$]
    \State $\complete(y_1, x_1)$
    \If{$y_1=x_1$}
        \State $\complete(y_2, x_2)$
        \If{$y_2 = y_1$}
            \Return $y_1$
        \Else\ [$y_2 \neq y_1$]
            \Return $\evaluate(y_3)$
        \EndIf
    \Else\ [$y_1 \neq x_1$]
        \State $\evaluate(y_3)$
        \If{$y_3 = y_1$}
            \Return $y_1$
        \Else\ 
            \Return $\complete(y_2, x_2)$
        \EndIf
    \EndIf
\EndIf
\end{algorithmic}
\end{algorithm}
\begin{figure}
\begin{center}
\includegraphics[width=.95\textwidth]{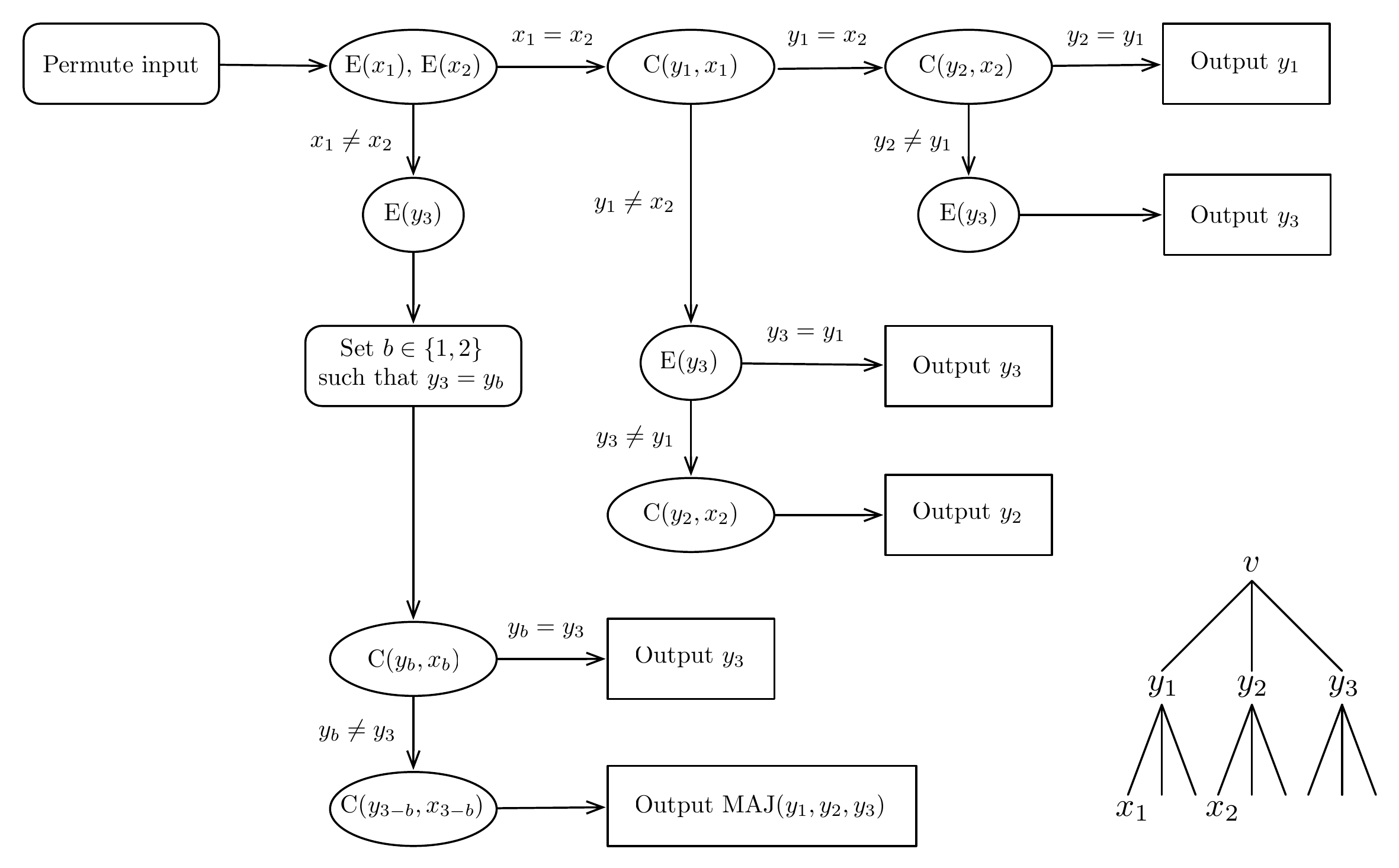}\hfil
\end{center}
\caption{Pictorial representation of algorithm $\evaluate$ on a
  subformula of height~$h(v) \geq 2$ rooted at~$v$. It is abbreviated
  by the letter `E' when called recursively on descendants of~$v$. The
  letter `C' abbreviates the second algorithm $\complete$ depicted in
Figure~\ref{fig-complete}. }
\label{fig-evaluate}
\end{figure}

The second algorithm, $\complete$ (see \algotwo{}), is depicted in
Figure~\ref{fig-complete}.
It takes two arguments~$v,y_1$, and
completes the evaluation of the subformula~$\TMAJ_h$ rooted at node~$v$,
where~$h(v) \geq 1$, and~$y_1$ is a child of~$v$ whose value has
already been evaluated.  The first step, permuting the input, means
applying a random permutation to the children $y_2, y_3$ of $v$ and
independent random permutations to each of the two sets of
grandchildren of $y_2, y_3$.  Note that this is similar in form to the
depth $2$ algorithm due to~\cite{JaKS03}.
\begin{algorithm} \footnotesize
\caption{\complete($v,y_1$): finish the evaluation of the 
subformula rooted at node~$v$}
\label{alg:algotwo}
\begin{algorithmic}
\Input{Node~$v$ of height $h(v)$; child~$y_1$ of~$v$ which has already
  been  evaluated}
\Output{the bit value~$\TMAJ_h(Z(v))$}
\BlankLine

\State Let $h = h(v)$
\BlankLine

\State Let~$y_2, y_3$ be a uniformly random permutation of the two
children of~$v$ other than~$y_1$ 
\BlankLine

\If{$h = 1$}\Comment{Base case}
    \State $\evaluate(y_2)$ 
    \If{$y_2 = y_1$}
        \Return $y_1$
    \Else\ 
        \Return $\evaluate(y_3)$
    \EndIf
\EndIf
\BlankLine

\State Let~$x_2$ be chosen uniformly at random from the
children of~$y_2$ \Comment{Recursive case}
\State \Comment{use the attached figure as a guide}\\
\hspace*{8.5cm}\includegraphics[height=100pt]{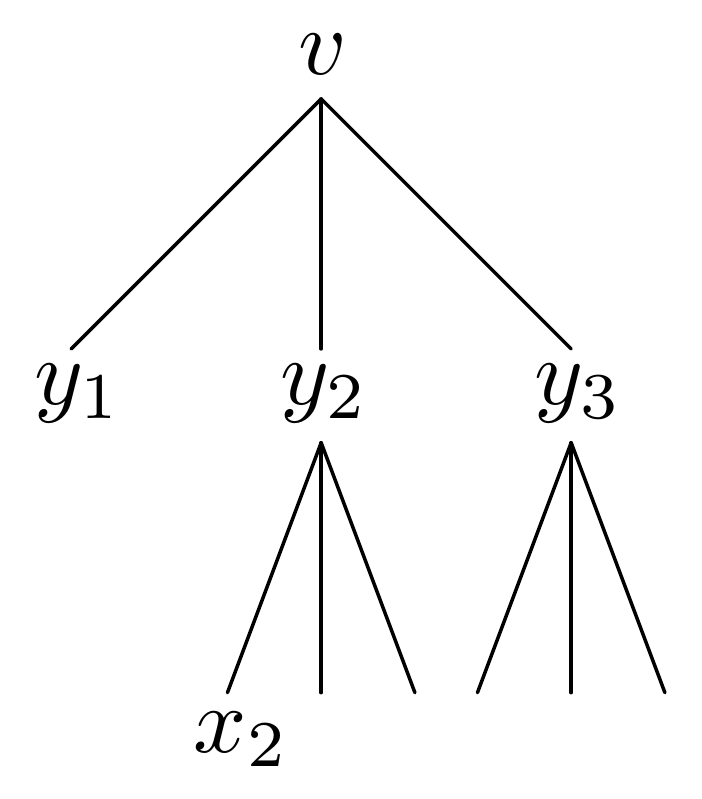}\vspace*{-3.5cm}
\State  $\evaluate(x_2)$
\BlankLine

\If{$y_1\neq x_2$}
    \State $\evaluate(y_3)$
    \If{$y_1=y_3$} 
        \Return  $y_1$ 
    \Else\ 
        \Return  $\complete(y_2, x_2)$
    \EndIf
\Else\ [$y_1=x_2$]
    \State $\complete(y_2, x_2)$
    \If{$y_1=y_2$}
        \Return  $y_1$ 
    \Else\ 
        \Return  $\evaluate(y_{3})$
    \EndIf
\EndIf
\end{algorithmic}
\end{algorithm}
\begin{figure}
\begin{center}
\includegraphics[width=.9\textwidth]{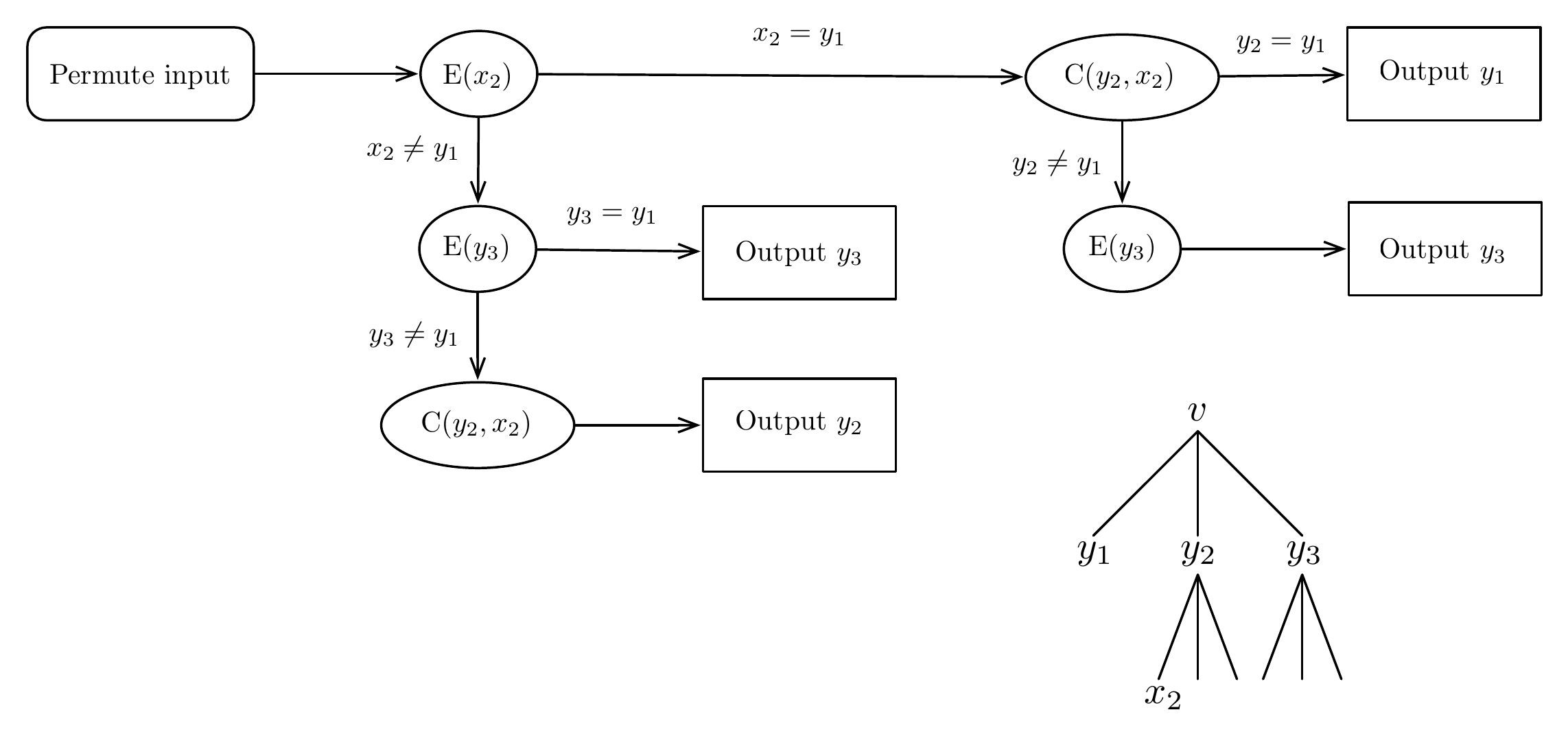} 
\end{center}
\caption{Pictorial representation of algorithm $\complete$ on a
  subformula of height~$h \geq 1$ rooted at~$v$ one child~$y_1$ of
  which has already been evaluated. It is abbreviated by the letter
  `C' when called recursively on descendants of~$v$.  Calls to
  $\evaluate$ are denoted `E'.}
\label{fig-complete}
\end{figure}

To evaluate an input of height $h$, we invoke~$\evaluate(r)$, where~$r$
is the root.  The correctness of the two algorithms follows by
inspection---they determine the values of as many children of the
node~$v$ as is required to compute the value of~$v$.

For the complexity analysis, we study the expected number of queries
they make for a worst-case input of fixed height $h$. 
(\emph{A priori\/}, we do not know if such an input is a hard input 
as defined in Section~\ref{sec-hard-distr}.)
Let $T(h)$ be
the worst-case complexity of $\evaluate(v)$ for $v$ of height $h$.  For
$\complete(v,y_1)$, we distinguish between two cases.  Let $y_1$ be
the child of node~$v$ that has already been evaluated. The complexity
given that $y_1$ is the minority child of~$v$ is denoted by~$S^{\m}$,
and the complexity given that it is a majority child is denoted
by~$S^{\rM}$.

The heart of the analysis is the following set of recurrences
that relate~$T, S^{\rM}$ and~$S^{\m}$ to each other.
\begin{lemma}
\label{lemma:algoonetwo}
We have
$S^{\m}(1) = 2$, 
$S^{\rM}(1) = \frac{3}{2}$,
$T(0) = 1$, 
and
$T(1) = \frac{8}{3}$.

For all~$h \geq 1$, it holds that
\begin{equation}
\label{eqn-SMbd}
S^{\rM}(h) \leq S^{\m}(h) \qquad \textrm{and} \qquad 
S^{\rM}(h) \leq T(h) \enspace.
\end{equation}
Finally, for all~$h \geq 2$, it holds that
\begin{eqnarray}
\label{eqn-Sm}
S^{\m}(h)  & = & T(h-2) + T(h-1) + \frac{2}{3}\, S^{\rM}(h-1) +     
                    \frac{1}{3}\, S^{\m}(h-1) \enspace, \\
\label{eqn-SM}
S^{\rM}(h) & = & T(h-2) + \frac{2}{3}\, T(h-1) 
                    + \frac{1}{3}\, S^{\rM}(h-1) + \frac{1}{3}\, S^{\m}(h-1)
\enspace,
                    \quad \textrm{and} \\
\label{eqn-T}
T(h)      & = & 2\, T(h-2) + \frac{23}{27}\, T(h-1) + \frac{26}{27}\, 
                S^{\rM}(h-1) + \frac{18}{27}\, S^{\m}(h-1) \enspace.
\end{eqnarray}
\end{lemma}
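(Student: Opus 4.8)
The plan is to prove the lemma by induction on the height $h$, deriving the recurrences~(\ref{eqn-Sm})--(\ref{eqn-T}) and the inequalities~(\ref{eqn-SMbd}) at height $h$ from the same facts at heights $h-1$ and $h-2$. For the base cases I would just inspect~\algoone{} and~\algotwo{}: $T(0)=1$ since a leaf costs one query; $T(1)=8/3$ because on a non-constant height-$1$ input \evaluate{} reads two random leaves and, with probability $2/3$, the third, for expected cost $\frac13\cdot2+\frac23\cdot3$; $S^{\m}(1)=2$ because when $y_1$ is the minority child both $y_2$ and $y_3$ disagree with the majority and get read; and $S^{\rM}(1)=3/2$ because when $y_1$ is a majority child the worst case has exactly one of $y_2,y_3$ as the minority and \complete{} reads $y_2$ and then, with probability $1/2$, $y_3$. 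The inequalities~(\ref{eqn-SMbd}) hold at $h=1$ from these values, and for $h\ge2$ they follow from the recurrences by subtraction: $S^{\m}(h)-S^{\rM}(h)=\frac13 T(h-1)+\frac13 S^{\rM}(h-1)\ge0$ and $T(h)-S^{\rM}(h)=T(h-2)+\frac{5}{27}T(h-1)+\frac{17}{27}S^{\rM}(h-1)+\frac13 S^{\m}(h-1)\ge0$. So the real work is in the three recurrences, and the induction hypothesis conveniently supplies $S^{\rM}(h-1)\le S^{\m}(h-1)$ and $S^{\rM}(h-1)\le T(h-1)$ for use in that step.

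To prove each recurrence for $h\ge2$ I would trace the algorithm branch by branch on a \emph{hard} input (one in which every internal node has exactly one minority child), counting expected queries, and then separately check that no input is worse. In \complete{} the only randomness affecting the case split is the single random grandchild $x_2$ (and the order of $y_2,y_3$), giving denominator $3$: depending on whether $x_2$ turns out to agree or disagree with the value of $y_2$, the recursive call $\complete(y_2,x_2)$ costs $S^{\rM}(h-1)$ or $S^{\m}(h-1)$, while $\evaluate(x_2)$ costs $T(h-2)$ and $\evaluate(y_3)$, on the branches where it is made, costs $T(h-1)$; summing over branches gives~(\ref{eqn-Sm}) when $y_1$ is the minority child and~(\ref{eqn-SM}) when it is a majority child. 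In \evaluate{}, after $\evaluate(x_1)$ and $\evaluate(x_2)$ (cost $2T(h-2)$), I would split first on which of the three children of $v$ is the minority child \emph{and in which of the roles $y_1,y_2,y_3$ it has been placed}---these three cases are genuinely distinct because the pseudocode does not treat the roles symmetrically---and then, inside each, on the four possible value-pairs $(x_1,x_2)$, which have probabilities $\frac49,\frac29,\frac29,\frac19$. For each of the resulting twelve leaves one reads off the branch taken, the recursive calls it makes, and, for every call $\complete(y_i,x_i)$, whether the already-evaluated grandchild $x_i$ agrees with the value of $y_i$---that agreement, not the subsequent test in the code, is what decides between $S^{\rM}(h-1)$ and $S^{\m}(h-1)$. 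Averaging the twelve leaf costs, with common denominator $27=3^3$, yields exactly~(\ref{eqn-T}).

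It remains to justify the ``$=$'': on any input the cost of each recursive call is at most the corresponding worst-case quantity at height $h-1$ or $h-2$, so the expected cost of $\evaluate(v)$ (and likewise $\complete$) is bounded by a fixed nonnegative combination of $T(h-2),T(h-1),S^{\rM}(h-1),S^{\m}(h-1)$ whose coefficients depend only on the local structure at $v$ (whether $v$ has a minority child, and whether each $y_i$ does). Using $S^{\rM}(h-1)\le S^{\m}(h-1)$ and $S^{\rM}(h-1)\le T(h-1)$ one checks that, among the finitely many such structures, the coefficients are largest for the fully hard structure; hence the hard input is worst-case and the recurrence holds with equality (which also feeds the induction at the next height, where hard subtrees are what make the recursive calls attain their worst case).

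The step I expect to be the main obstacle is the twelve-leaf analysis for~(\ref{eqn-T}). It is easy to slip and treat roles $y_1$ and $y_2$ as interchangeable, which is precisely the mistake that would replace the coefficient $26/27$ by the naive value $28/27$; and within each leaf one has to classify the $\complete$ calls by whether the pre-evaluated grandchild matches its parent rather than by the labels appearing in the pseudocode. The supplementary check that no non-hard local structure beats the hard one is conceptually easy given~(\ref{eqn-SMbd}) at height $h-1$, but still has to be run through all the cases.
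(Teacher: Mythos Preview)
Your proposal is correct and follows essentially the same approach as the paper: an induction in which the base cases are read off the algorithms, the recurrences~(\ref{eqn-Sm})--(\ref{eqn-T}) at height~$h$ are obtained by a branch-by-branch case analysis on a hard input (using~(\ref{eqn-SMbd}) at height~$h-1$ to argue that the hard local structure is worst-case), and the inequalities~(\ref{eqn-SMbd}) at height~$h$ then follow from the recurrences. Your ``twelve-leaf'' breakdown for~$T(h)$ is exactly the paper's three four-row tables, and your warning about the asymmetry between the roles of~$y_1$ and~$y_2$ is precisely the point where the paper's third table differs from the second in one row (yielding the $26/27$ rather than $28/27$). One cosmetic slip: in your justification of $S^{\m}(1)=2$ you write that $y_2,y_3$ ``disagree with the majority''---they \emph{are} the majority; the conclusion is still correct.
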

\emph{Proof.}
  We prove these relations by induction.  The bounds for~$h \in
  \set{0, 1}$ follow immediately by inspection of the algorithms.  To
  prove the statement for $h \geq 2$, we assume the recurrences hold
  for all~$l < h$.  Observe that it suffices to prove Equations
  (\ref{eqn-Sm}), (\ref{eqn-SM}), (\ref{eqn-T}) for height $h$, since
  the values of the coefficients immediately imply that
  Inequalities~(\ref{eqn-SMbd}) holds for $h$ as well. 

\textbf{Equation (\ref{eqn-Sm}).}
Since $\complete(v,y_1)$ always starts by computing the value of a
grandchild~$x_2$ of~$v$, we get the first term $T(h-2)$ in
Eq.~(\ref{eqn-Sm}).  It remains to show that the worst-case complexity
of the remaining queries is $T(h-1) + (2/3)S^{\rM}(h-1) +
(1/3)S^{\m}(h-1)$.

Since~$y_1$ is the minority child of~$v$, we have that $y_1\neq
y_2=y_3$. The complexity of the remaining steps is summarized in the
next table in the case that the three children of node $y_2$ are not
all equal. In each line of the table, the worst case complexity is
computed given the event in the first cell of the line.  The second
cell in the line is the probability of the event in the first cell
over the random permutation of the children of $y_2$.  This gives a
contribution of $T(h-1) + (2/3)S^{\rM}(h-1) + (1/3)S^{\m}(h-1)$.

$$
\begin{array}{|c|c|c|}
\hline\multicolumn{3}{|c|}{S^{\m}(h) \text{ (we have } y_1\neq y_2=y_3)}\\
\hline
\text{event}&\text{probability}&\text{complexity}\\
\hline
y_2=x_2&2/3&T(h-1)+S^{\rM}(h-1)\\
\hline
y_2\neq x_2&1/3&T(h-1)+S^{\m}(h-1)\\
\hline
\end{array}
$$
 
This table corresponds to the worst case, as the only other case is
when all children of $y_2$ are equal, in which the cost is $T(h-1) +
S^{\rM}(h-1)$.  Applying Inequality~(\ref{eqn-SMbd}) for~$h-1$, this
is a smaller contribution than the case where the children are not all
equal.

Therefore the worst case complexity for $S^{\m}$ is given by
Eq.~(\ref{eqn-Sm}).  We follow the same convention and appeal to this
kind of argument also while deriving the other two recurrence
relations.

\textbf{Equation (\ref{eqn-SM}).}
Since $\complete(v,y_1)$ always starts by computing the value of a
grandchild~$x_2$ of~$v$, we get the first term $T(h-2)$ in
Eq.~(\ref{eqn-SM}).  There are then two possible patterns, depending
on whether the three children~$y_1, y_2, y_3$ of~$v$ are all
equal. If~$y_1 = y_2 = y_3$, we have in the case that all children of
$y_2$ are not equal that:
$$
\begin{array}{|c|c|c|}
\hline\multicolumn{3}{|c|}{S^{\rM}(h) \text{ if } y_1= y_2=y_3}\\
\hline
\text{event}&\text{probability}&\text{complexity}\\
\hline
y_2=x_2&2/3&S^{\rM}(h-1)\\
\hline
y_2\neq x_2&1/3&T(h-1)\\
\hline
\end{array}
$$
As in the above analysis of Eq.~(\ref{eqn-Sm}), applying
Inequalities~(\ref{eqn-SMbd}) for height $h-1$ we get that the
complexity in the case when all children of $y_2$ are equal is bounded
above by the complexity when the children are not all equal.
Therefore the above table describes the worst-case
complexity for the case when $y_1 = y_2 = y_3$.

If~$y_1, y_2, y_3$ are not all equal, we have two events $y_1=y_2\neq
y_3$ or $y_1 = y_3 \neq y_2$ of equal probability as~$y_1$ is a
majority child of~$v$. This leads to the following tables for the case
where the children of $y_2$ are not all equal
$$
\begin{array}{|c|c|c|}
\hline\multicolumn{3}{|c|}{S^{\rM}(h) \text{ given } y_1= y_2\neq y_3}\\
\hline
\text{event}&\text{prob.}&\text{complexity}\\
\hline
y_2=x_2&2/3&S^{\rM}(h-1)\\
\hline
y_2\neq x_2&1/3&T(h-1)+S^{\m}(h-1)\\
\hline
\end{array}
\qquad
\begin{array}{|c|c|c|}
\hline\multicolumn{3}{|c|}{S^{\rM}(h) \text{ given } y_1 = y_3 \neq y_2}\\
\hline
\text{event}&\text{prob.}&\text{complexity}\\
\hline
y_2=x_2&2/3&T(h-1)\\
\hline
y_2\neq x_2&1/3&T(h-1)+S^{\m}(h-1)\\
\hline
\end{array}
$$
As before, we apply Inequalities~(\ref{eqn-SMbd}) for height $h-1$
to see that the worst case occurs when the children of~$y_2$ are not
all equal.

From the above tables, we deduce that the worst-case complexity occurs
on inputs where $y_1, y_2, y_3$ are not all equal. This is because
Inequalities~(\ref{eqn-SMbd}) for height $h-1$ imply that,
line by line, the complexities in the table for the case $y_1=y_2=y_3$
are upper bounded by the corresponding entries in each of the latter
two tables. To conclude Eq.~(\ref{eqn-SM}), recall that the two
events~$y_1 = y_2 \neq y_3$ and~$y_1 = y_3 \neq y_2$ occur with
probability~$1/2$ each:
\begin{eqnarray*}
S^{\rM}(h) & = & T(h-2) + \frac{1}{2} \left[ \frac{2}{3}\, S^{\rM}(h-1)
+ \frac{1}{3} \left( T(h-1) + S^{\m}(h-1) \right) \right] \\
    &  & \mbox{} + \frac{1}{2}
\left[ \frac{2}{3}\, T(h-1) + \frac{1}{3} \left( T(h-1) + S^{\m}(h-1)
\right) \right] \enspace.
\end{eqnarray*}

\textbf{Equation (\ref{eqn-T}).}
Since $\evaluate(v)$ starts
with two calls to itself to compute~$x_1,x_2$, we get the first
term~$2\, T(h-2)$ on the right hand side. 
For the remaining terms, we consider two possible cases,
depending on whether the three children~$y_1, y_2, y_3$ of~$v$ are
equal. If~$y_1=y_2=y_3$, assuming that the children of~$y_1$ are not
all equal, and the same for the children of~$y_2$, we have
$$
\begin{array}{|c|c|c|}
\hline\multicolumn{3}{|c|}{T(h) \text{ given } y_1= y_2= y_3}\\\hline
\text{event}&\text{probability}&\text{complexity}\\
\hline
y_1=x_1, y_2=x_2&4/9&2S^{\rM}(h-1)\\
\hline
y_1=x_1, y_2\neq x_2&2/9&T(h-1)+S^{\rM}(h-1)\\
\hline
y_1\neq x_1, y_2=x_2&2/9&T(h-1)+S^{\rM}(h-1)\\
\hline
y_1\neq x_1, y_2\neq x_2&1/9&T(h-1)+S^{\m}(h-1)\\
\hline
\end{array}
$$
As before, the complexities are in non-decreasing order, and we
observe that Inequalities~(\ref{eqn-SMbd}) for height $h-1$ imply that
in a worst case input the children of~$y_1$ are not all equal, and that the
same holds for the children of~$y_2$.

If~$y_1, y_2, y_3$ are not all equal, we have three events
$y_1=y_2\neq y_3$, $y_1\neq y_2=y_3$ and $y_3=y_1\neq y_2$ each of
which occurs with probability $1/3$. This leads to the following
analyses
$$
\begin{array}{|c|c|c|}
\hline\multicolumn{3}{|c|}{T(h) \text{ given } y_1= y_2\neq y_3}\\\hline
\text{event}&\text{probability}&\text{complexity}\\
\hline
y_1=x_1, y_2=x_2&4/9&2S^{\rM}(h-1)\\
\hline
y_1=x_1, y_2\neq x_2&2/9&T(h-1)+S^{\rM}(h-1)+S^{\m}(h-1)\\
\hline
y_1\neq x_1, y_2=x_2&2/9&T(h-1)+S^{\rM}(h-1)+S^{\m}(h-1)\\
\hline
y_1\neq x_1, y_2\neq x_2&1/9&T(h-1)+2S^{\m}(h-1)\\
\hline
\end{array}
$$
$$
\begin{array}{|c|c|c|}
\hline\multicolumn{3}{|c|}{T(h) \text{ given } y_1\neq y_2= y_3}\\\hline
\text{event}&\text{probability}&\text{complexity}\\
\hline
y_1=x_1, y_2=x_2&4/9&T(h-1)+S^{\rM}(h-1)\\
\hline
y_1=x_1, y_2\neq x_2&2/9&T(h-1)+S^{\rM}(h-1)+S^{\m}(h-1)\\
\hline
y_1\neq x_1, y_2=x_2&2/9&T(h-1)+S^{\rM}(h-1)+S^{\m}(h-1)\\
\hline
y_1\neq x_1, y_2\neq x_2&1/9&T(h-1)+2S^{\m}(h-1)\\
\hline
\end{array}
$$
$$
\begin{array}{|c|c|c|}
\hline\multicolumn{3}{|c|}{T(h) \text{ given } y_3=y_1\neq y_2}\\\hline
\text{event}&\text{probability}&\text{complexity}\\
\hline
y_1=x_1, y_2=x_2&4/9&T(h-1)+S^{\rM}(h-1)\\
\hline
y_1=x_1, y_2\neq x_2&2/9&T(h-1)+S^{\rM}(h-1) + S^{\m}(h-1)\\
\hline
y_1\neq x_1, y_2=x_2&2/9&T(h-1)+S^{\m}(h-1)\\
\hline
y_1\neq x_1, y_2\neq x_2&1/9&T(h-1)+2S^{\m}(h-1)\\
\hline
\end{array}
$$
In all three events, we observe that Inequalities~(\ref{eqn-SMbd}) for
height $h-1$ imply that in a worst case input, the children of~$y_1$
are not all equal, and the same holds for the children of~$y_2$.

Applying Inequalities~(\ref{eqn-SMbd}) for height $h-1$, it follows that
line by line the complexities in the last three tables are at least
the complexities in the table for the case~$y_1 = y_2 = y_3$.
Therefore the worst case also corresponds to an input in which~$y_1,
y_2, y_3$ are not all equal.  We conclude Eq.~(\ref{eqn-T}) as before,
by taking the expectation of the complexities in the last three
tables.

\begin{theorem}
\label{thm:algoonetwo}
$T(h), S^{\rM}(h)$, and~$S^{\m}(h)$ are all in~$\Order(\alpha^h)$,
where $\alpha\leq 2.64944$.
\end{theorem}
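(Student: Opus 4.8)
The plan is to read off, from Lemma~\ref{lemma:algoonetwo}, that the three sequences obey a coupled homogeneous linear recurrence with constant coefficients --- for $h\ge2$ they satisfy Equations~(\ref{eqn-Sm})--(\ref{eqn-T}) exactly, with base values $T(0)=1$, $T(1)=8/3$, $S^{\rM}(1)=3/2$, $S^{\m}(1)=2$ --- and to bound its growth rate. Packaging the state as $V(h) = \bigl(T(h),\,T(h-1),\,S^{\rM}(h),\,S^{\m}(h)\bigr)^{\top}$ gives $V(h)=M\,V(h-1)$ for $h\ge2$, where $M$ is the entrywise nonnegative matrix
\[
M \quad = \quad
\begin{pmatrix}
\frac{23}{27} & 2 & \frac{26}{27} & \frac{18}{27}\\[4pt]
1 & 0 & 0 & 0\\[4pt]
\frac23 & 1 & \frac13 & \frac13\\[4pt]
1 & 1 & \frac23 & \frac13
\end{pmatrix}
\]
read off from the recurrences (the second row just carries $T(h-1)$ forward). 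Each coordinate of $V(h)$ is then $\Order\bigl(h^{3}\rho(M)^{h}\bigr)$, where $\rho(M)$ is the spectral radius of $M$, so the statement reduces to showing $\rho(M)\le 2.64944$. I would obtain this, together with an explicit $\Order(\cdot)$ bound, by a direct induction.

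Fix $\alpha:=2.64944$ and choose positive reals $t_0,a_0,m_0$ satisfying
\[
\begin{aligned}
& m_0\,\alpha^2 \;\ge\; t_0 + t_0\alpha + \tfrac23 a_0\alpha + \tfrac13 m_0\alpha,\\
& a_0\,\alpha^2 \;\ge\; t_0 + \tfrac23 t_0\alpha + \tfrac13 a_0\alpha + \tfrac13 m_0\alpha,\\
& t_0\,\alpha^2 \;\ge\; 2t_0 + \tfrac{23}{27}t_0\alpha + \tfrac{26}{27}a_0\alpha + \tfrac{18}{27}m_0\alpha.
\end{aligned}
\]
Such a triple exists: the positive Perron eigenvector of $M$ makes all three hold with equality at $\alpha=\rho(M)$; each of the three expressions, as a function of $\alpha$, is an upward-opening quadratic (the $\alpha^2$ coefficient is the positive number $m_0$, $a_0$, or $t_0$), so the inequalities persist for every $\alpha\ge\rho(M)$, and a small rational perturbation preserves them (strictly) at $\alpha=2.64944>\rho(M)$. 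I would display one such rational triple explicitly.

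Now fix $c\ge1$ large enough that $T(0)\le c\,t_0$, $T(1)\le c\,t_0\alpha$, $S^{\rM}(1)\le c\,a_0\alpha$ and $S^{\m}(1)\le c\,m_0\alpha$ (possible since $t_0,a_0,m_0,\alpha>0$). I claim $T(h)\le c\,t_0\alpha^h$, $S^{\rM}(h)\le c\,a_0\alpha^h$ and $S^{\m}(h)\le c\,m_0\alpha^h$ for all $h\ge0$. The cases $h\in\{0,1\}$ are the choice of $c$, and these are precisely the quantities appearing on the right of the height-$2$ recurrences; for $h\ge2$ one substitutes the inductive hypotheses at heights $h-1$ and $h-2$ into Equations~(\ref{eqn-Sm})--(\ref{eqn-T}), and, since every coefficient is nonnegative, each right-hand side is at most $c\,\alpha^{h-2}$ times the right-hand side of the corresponding displayed inequality, namely at most $c\,m_0\alpha^h$, $c\,a_0\alpha^h$, and $c\,t_0\alpha^h$ respectively. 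This closes the induction and yields $T(h),S^{\rM}(h),S^{\m}(h)=\Order(\alpha^h)$ with $\alpha=2.64944$.

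The one genuinely non-routine step is producing the triple $t_0,a_0,m_0$ and verifying the three displayed inequalities at $\alpha=2.64944$ --- equivalently, checking that the degree-four characteristic polynomial $\chi_M$ satisfies $\chi_M(2.64944)\ge0$, which for the (easily checked) irreducible nonnegative matrix $M$ is exactly the condition $\rho(M)\le2.64944$, since $\chi_M$ is negative immediately below its largest root. This small arithmetic verification is what actually pins down the base $2.64944$ and is the main obstacle; note that no further reasoning about worst-case inputs is needed at this point, since Lemma~\ref{lemma:algoonetwo} has already fixed the relevant recurrences as equalities.
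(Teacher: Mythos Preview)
Your proposal is correct and follows essentially the same route as the paper. Both arguments set up the ansatz $T(h)\le a\alpha^h$, $S^{\rM}(h)\le b\alpha^h$, $S^{\m}(h)\le c\alpha^h$ and reduce the induction step to exactly the three inequalities you display (these are the paper's Inequalities~(\ref{eqn-ind}) with $a=t_0$, $b=a_0$, $c=m_0$); the paper then simply exhibits working numerical values $\alpha=2.64944$, $a=1.02$, $b=0.559576\,a$, $c=0.755791\,a$, whereas you add the matrix/Perron--Frobenius framing to explain \emph{why} such a triple exists and defer the explicit numerical check---a cosmetic difference only, and your argument that the three quadratics stay nonnegative for all $\alpha\ge\rho(M)$ (since each is negative at $\alpha=0$, vanishes at $\alpha=\rho(M)$, and opens upward) is sound.
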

\begin{proof}
We make an ansatz~$T(h)\leq a\,\alpha^h$, $S^{\rM}(h) \leq b\,
\alpha^h$, and~$S^{\m}(h)\leq c\, \alpha^h$, and find
constants~$a,b,c, \alpha$ for which we may prove these inequalities by
induction.

The base cases tell us that
$2 \leq c \alpha$, $\frac{3}{2} \leq b \alpha$,  $1 \leq a$, and $\frac{8}{3} \leq a \alpha$.

Assuming we have constants that satisfy these conditions, and that the
inequalities hold for all appropriate~$l < h$, for some~$h \geq 2$, we
derive sufficient conditions for the inductive step to go through.

By the induction hypothesis, Lemma~\ref{lemma:algoonetwo}, and the
ansatz, we have
\begin{eqnarray*}
S^{\m}(h)  & \leq & 
    a\, \alpha^{h-2} + a\, \alpha^{h-1} + \frac{2b}{3}\, 
    \alpha^{h-1} + \frac{c}{3}\, \alpha^{h-1} \enspace, \\
S^{\rM}(h) & \leq & 
    a\, \alpha^{h-2} + \frac{2a}{3}\, \alpha^{h-1} 
    + \frac{b}{3}\, \alpha^{h-1} + \frac{c}{3}\, \alpha^{h-1} \enspace,
                    \quad \textrm{and} \\
T(h) & \leq & 
    2a\, \alpha^{h-2} + \frac{23a}{27}\, \alpha^{h-1}
    + \frac{26}{27}\, \alpha^{h-1} + \frac{18}{27}\, \alpha^{h-1}
\enspace.
\end{eqnarray*}
These would imply the required bounds on~$S^{\m}(h), S^{\rM}(h), T(h)$
if
\begin{equation}
\label{eqn-ind} 
a + \tfrac{3a+2b+c}{3} \alpha  \leq  c\, \alpha^2 \enspace, \qquad
a + \tfrac{2a+b+c}{3} \alpha  \leq  b\, \alpha^2 \enspace, \textrm{ and}\qquad
2a + \tfrac{23a+26b+18c}{27} \alpha \leq  a\, \alpha^2 \enspace.
\end{equation}
The choice~$\alpha=2.64944$, $a = 1.02$, $b=0.559576 \times a$, and $c=0.755791 \times a$ satisfies the base case as well as all the
Inequalities~(\ref{eqn-ind}), so the theorem holds by induction.
\end{proof}

\section{Concluding remarks}

In this article, we revisited a technique due to Jayram, Kumar, and
Sivakumar for proving a lower bound on the decision tree complexity 
of~$\TMAJ_h$, the recursive majority-of-three function of height~$h$.
We showed that it could be enhanced by obtaining better estimates 
on the probability~$p_h^\delta$
with which the absolute minority variable is queried under the hard
distribution. The new estimates are obtained by considering highly
symmetric encodings of height-$(h-k)$ inputs into height-$h$ inputs.
The analysis of the encodings quickly becomes intractable with
growing~$k$. However, by appealing to the underlying symmetry in the
function, the analysis can be
executed explicitly for~$k = 1,2$, and with the aid of a computer
for~$k \le 4$. This leaves us with several immediate questions about the
technique:
\begin{enumerate}

\item Is there is a more efficient algorithm for the analysis?

\item Is there a succinct, explicit analysis for larger values of~$k$?

\item What is the best lower bound we can obtain using this technique?

\item Does this technique give us any intuition into more efficient 
algorithms?

\end{enumerate}

We also present a new (more efficient) algorithm for~$\TMAJ_h$ based on 
the idea that a partial evaluation of a formula helps us form an opinion 
about the value of its subformulae.  We use the opinions at a certain 
stage of the algorithm to choose the next variable to query.
Additionally, we use a depth-$2$ recursive
algorithm that is optimized for computing the value of a partially 
evaluated formula. It is likely that algorithms with depth~$k$ recursion,
with~$k > 2$ give us further improvements in efficiency. However, their 
analysis seems to be beyond the scope of the techniques used in this work.

\bibliographystyle{alpha}
\bibliography{general}

\appendix

\section{Python program}\label{code}

The following Python program is also available at:\\
\indent\url{https://www.dropbox.com/s/wcrdoib5h918p2e/commented-majority.py}\bigskip

{\small
\lstinputlisting[language=Python,tabsize=2,breaklines=true]{commented-majority.py}
}

\end{document}